
\documentclass{article}

\usepackage[utf8]{inputenc}
\usepackage[retainorgcmds]{IEEEtrantools}
\usepackage{amsfonts,latexsym,amssymb}
\usepackage{mathrsfs,amsmath}
\usepackage{amsthm}
\usepackage{bbm}
\usepackage{subcaption}
\usepackage{upgreek}
\usepackage[toc,page]{appendix}
\usepackage{array}
\usepackage{epstopdf,epsfig}

\newtheorem{theorem}{\bf Theorem}[section]
\newtheorem{prop}{\bf Proposition}[section]

\newtheorem*{theorem*}{\bf Theorem \ref{th:global_eta}}
\newtheorem*{remark}{\bf Remark}


\graphicspath{ {./Images/} }

\newcommand{\vecD}{\mathbf{D}}
\newcommand{\vecE}{\mathbf{E}}
\newcommand{\vecF}{\mathbf{F}}
\newcommand{\vecG}{\mathbf{G}}

\newcommand{\vecI}{\mathbf{I}}
\newcommand{\vecJ}{\mathbf{J}}

\newcommand{\vecS}{\mathbf{S}}
\newcommand{\vecT}{\mathbf{T}}

\newcommand{\vecV}{\mathbf{V}}
\newcommand{\vecW}{\mathbf{W}}

\newcommand{\vep}{\varepsilon}
\newcommand{\vecb}{\mathbf{b}}

\newcommand{\vecf}{\mathbf{f}}

\newcommand{\vecn}{\mathbf{n}}

\newcommand{\vecq}{\mathbf{q}}
\newcommand{\vecr}{\mathbf{r}}
\newcommand{\vecs}{\mathbf{s}}
\newcommand{\vect}{\mathbf{t}}

\newcommand{\vecv}{\mathbf{v}}

\newcommand{\vecz}{\mathbf{z}}


\newcommand{\vecphi}{\boldsymbol\upphi}

\newcommand{\vectau}{\boldsymbol\tau}

\newcommand{\thetad}{\dot{\theta}}
\newcommand{\phid}{\dot{\phi}}
\newcommand{\vecphid}{\dot{\boldsymbol\upphi}}

\newcommand{\vecqd}{\dot{\mathbf{q}}}

\newcommand{\Rsv}{\mbox{\boldmath$\mathcal{R}$}}
\newcommand{\Gc}{\mbox{\boldmath$\mathcal{G}$}}

\newcommand{\ct}{\mathrm{c_t}}

\newcommand{ \vctwo }[2] {
 \left[\begin{array}{c }
  #1        \\
  #2
\end{array} \right]  }

\newcommand{ \matwo }[4] {
 \left[\begin{array}{c c }
   #1  &  #2     \\
   #3  &  #4
\end{array} \right]  }

\newcommand{ \matwothree }[6] {
 \left[\begin{array}{c c}
   #1  &  #2   \\
   #3  &  #4   \\
   #5  &  #6
\end{array} \right]  }

\newcommand{ \mathree }[9] {
 \left[\begin{array}{c c c}
   #1  &  #2  &  #3   \\
   #4  &  #5  &  #6   \\
   #7  &  #8  &  #9
\end{array} \right]  }

\title{Optimization and small-amplitude analysis of Purcell's three-link microswimmer model}

\author{Oren Wiezel and Yizhar Or
\thanks{O. Wiezel and Y. Or are with the Faculty of Mechanical Engineering, Technion - Israel Institute of Technology, Haifa 32000, Israel {\tt\small izi@technion.ac.il}}%
}


\begin{document}

\maketitle
\begin{abstract}
This work studies the motion of Purcell's three-link microswimmer in viscous flow, by using perturbation expansion of its dynamics under small-amplitude strokes. Explicit leading-order expressions and next-order correction terms for the displacement of the swimmer are obtained for the cases of a square or circular gait in the plane of joint angles. The correction terms demonstrate the reversal in movement direction for large stroke amplitudes, which has previously only been shown numerically. In addition, asymptotic expressions for Lighthill's energetic efficiency are obtained for both gaits. These approximations enable calculating optimal stroke amplitudes and swimmer's geometry (i.e. ratio of links' lengths) for maximizing  either net displacement or Lighthill's efficiency.
\end{abstract}


\section{Introduction}

The study of micron-size swimmers dynamics has in recent years become a highly active research area and has possible implications on understanding the motion of swimming microorganisms and biological infections \cite{LaugaPowers09,lauga2015bacterial}. The considerable advances made in the field of micro and nano-technology have promoted the possibility of manufacturing miniature robotic devices that operate in these small scales\cite{abbott2009should,kosa2008flagellar,jang2015undulatory}. Such mini-robots may have many applications in medicine, performing medical procedures in a minimally invasive way and delivering drugs with high precision \cite{gao2012cargo,PeyerNelson2013,sitti2015biomedical}. All this requires an understanding of swimming dynamics at the low Reynolds number regime.

Reynolds number represents the ratio of the inertial forces to the viscous ones. When dealing in microfluidics, where the characteristic lengths are extremely small, hydrodynamic forces are typically governed by very low Reynolds numbers ($\text{Re}\ll1$) \cite{happel&brenner_book}. 
The result of this is that the strategy of motion in this regime needs to be drastically different than the familiar motion of larger organisms, such as fish, that rely on imparting momentum to the surrounding fluid.
In his famous lecture \cite{purcell1977life}, Purcell introduced the "Scallop theorem" which states that a swimmer  that changes its shape and then changes back to the original shape by reversing the same sequence will return to the point it started its motion with no regard to the speed in which any part of the motion is made.  Any reciprocal motion of the swimmer will result in zero net translation.
There are several ways of overcoming the scallop theorem and generating net motion in a low Reynolds regime. 
One of them is continuously performing a unidirectional rigid body motion such as a rotating corkscrew. In this way there is no reciprocal motion and the result is generation of net motion of the swimmer. This is precisely the method used by the "Escherichia coli" bacteria to propel itself \cite{berg1973bacteria}.
Another way for overcoming the scallop theorem is by making a non-reciprocal periodic shape change, which will be henceforth called a "gait". The simplest version of such a swimmer, known as "Purcell's 3-link swimmer" (Figure \ref{fig:swimmer}), was suggested by Purcell in \cite{purcell1977life} and can be seen as a simplified version of the travelling wave "Taylor sheet" \cite{taylor1951analysis} in two dimensions which is discretized to have only two degrees of freedom.
Purcell's swimmer is comprised of three rigid links connected by two rotary joints (see figure \ref{fig:swimmer}).
 \begin{figure}[!b]
    \centering
    \includegraphics[width=0.5\textwidth]{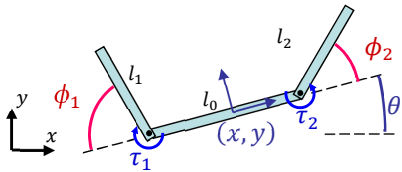}
    \caption{The "Purcell" 3-link swimmer}
    \label{fig:swimmer}
\end{figure}
Purcell indicated that this swimmer could propel itself along a straight line by alternately rotating its front and back links in a non-reciprocal way (Square gait. Figure \ref{fig:gaits.sq}). Through symmetry considerations alone, it can be shown that the three link swimmer will move along the $x$ axis when using the shape changes suggested by Purcell.
\begin{figure}[!t]
    \centering
    \begin{subfigure}{0.45\textwidth}
    \includegraphics[width=\textwidth]{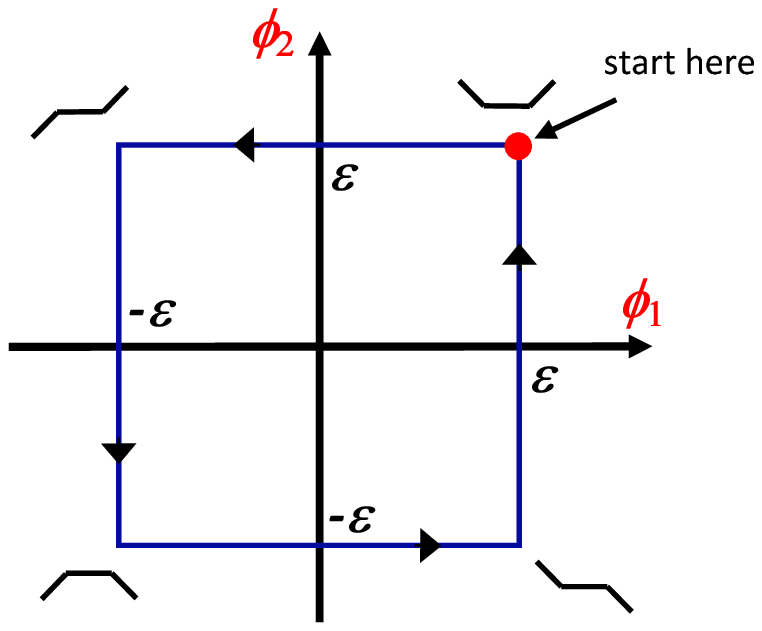}
    \caption{}
    \label{fig:gaits.sq}
    \end{subfigure}
    \begin{subfigure}{0.45\textwidth}
    \includegraphics[width=\textwidth]{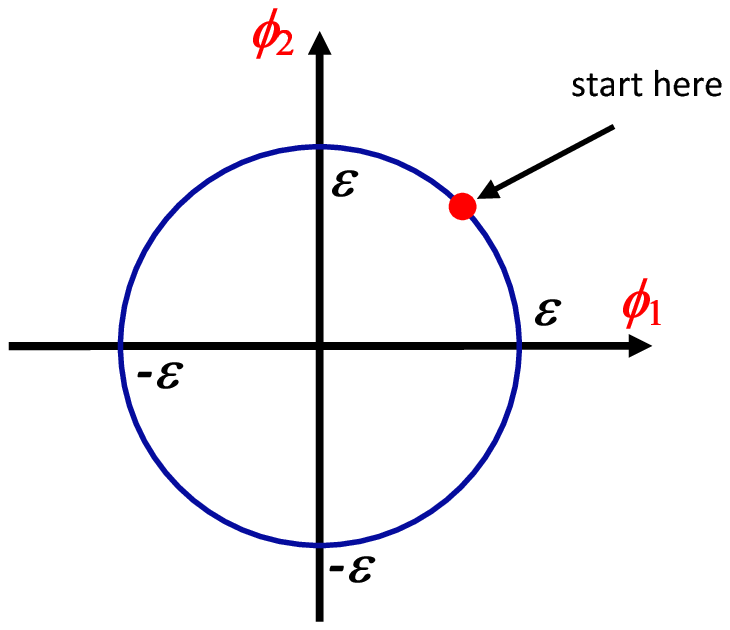}
    \caption{}
    \label{fig:gaits.cir}
    \end{subfigure}
    \caption{Gait representation in joint angle plane. (a) square gait (b) circular gait.}
    \label{fig:gaits}\vspace{-10pt}
\end{figure}
Purcell claimed that determining the direction of net motion (i.e. forward or backward) is trivial and left it as "an exercise for the student".
Only 26 years later, Becker et al \cite{becker2003self} obtained an explicit formulation for the dynamics of this microswimmer, and surprisingly found that the direction of net motion actually depends on the stroke amplitude of joint angles. Specifically, for small amplitudes the swimmer will move in one direction but for larger amplitudes the swimmer will move in the opposite direction. Additionally, \cite{becker2003self} studied Lighthill's energetic efficiency, which is roughly equivalent to the maximal mean speed achievable under a given mean expenditure of mechanical power. 
It is worth noting that both \cite{purcell1977life} and \cite{becker2003self} considered only the case where the joint angles rotate alternately, creating a square gait, and did not study other possible periodic shape change, such as the circular gait (Figure \ref{fig:gaits.cir}), at which the two joint angles oscillate sinusoidally with a quarter-period phase shift.

Some previous works have examined Purcell's swimmer from different prospectives. In \cite{Gutman2015Symmetries} Gutman and Or analyze the symmetries of Purcell's swimmer, derive conditions on gaits that result in movement along the swimmer's principal directions, and present motion experiments of a macro-scale robotic swimmer in a highly viscous fluid. Avron and Raz \cite{avron&raz08} and Hatton et al \cite{hatton2013geometric,hatton2013tro} utilize tools of differential geometry in order to obtain geometric visualization of performance measures in shape space (plane). By observing curvature maps of the swimmer's dynamics, those works can qualitatively explain the reversal in direction of motion for large-amplitude strokes, and also provide guidelines that help in obtaining optimal gaits. Tam and Hosoi \cite{tam2007optimal}, through numerical computation only, found gaits that achieve optimal translation during a period as well as energetically optimal gaits. They also found the {\em optimal geometry} of the swimmer, i. e. ratio between the swimmer's links lengths, for these two optimality criteria, again only through numeric calculations.

Aside from numerical computations, another useful approach is obtaining closed-form expressions under some simplifying scaling assumptions, by using asymptotic analysis.

In microswimmers, this approach dates back to the analysis of Taylor \cite{taylor1951analysis}, who showed that the swimming speed of Taylor's sheet scales quadratically with the wave's amplitude at leading order. In \cite{shapere1989geometry}, this concept was employed on analysis of self-propulsion of spherical swimmers performing a squirming motion. For the shape-changing three-link swimmer, one has to utilize the method of perturbation expansion \cite{nayfeh2008perturbation} under the assumption of small stroke amplitudes. This method was used in \cite{becker2003self} for obtaining leading-order expressions for the motion of Purcell's swimmer and finding its optimal geometry. In the work \cite{giraldi2015optimal}, leading order expressions were derived using a similar approach and then used to find optimal geometry of the swimmer for the square gait, and for derivation of an optimal polygonal gait under small joint angles. Interestingly, there are differences in the results found in \cite{becker2003self}, \cite{tam2007optimal} and \cite{giraldi2015optimal} in terms of the swimmer's optimal geometry.
Recently, perturbation expansion has become a more common tool for analysis of simple microswimmers motion in various cases such as a three-link swimmer with a passive elastic joint \cite{passov2012dynamics}, a two-link magnetically actuated microswimmer \cite{gutman2014simple}, and wobbling of a magnetically actuated helix \cite{man2013wobbling}.

The goal of this work is to introduce a systematic method for analyzing the dynamics of Purcell's three link swimmer using perturbation expansion and to exploit this method to perform optimizations on the swimmer geometry and stroke amplitude.
The main new contributions of this work compared to previous literature are: a. Formulation of leading-order expression for the displacement under the circular gait (Figure \ref{fig:gaits.cir}). b. Derivation of the next-order corrections terms for the displacement under both square and circular gaits, which explicitly show the reversal in movement direction first found in \cite{becker2003self}, as well as obtaining an approximation for the optimal stroke amplitude. c. Obtaining leading-order expressions for Lighthill's energetic efficiency of the two presented gaits and also a next-order term for the square gait, as well as using these expressions to perform optimization of the swimmer's geometry and stroke amplitude for maximal efficiency. d. Resolving the previously unexplained differences between the findings in \cite{giraldi2015optimal},\cite{tam2007optimal} and \cite{becker2003self} regarding the optimal geometry.

The paper is organized as follows.
The next section presents Purcell's three-link swimmer model and its equations of motion, the two gaits in question and the concept of Lighthill's energetic efficiency. Section 3 focuses on perturbation expansion, and presents a systematic way for deriving the net displacement of the swimmer in the form of a power series for any gait, and the first two terms are found explicitly for the square and circular ones. The optimal geometry and stroke amplitude for maximal displacement are also found. This section also includes the use of perturbation expansion to find the period time under constant joint torque for the square gait, and under constant power for both square and circular gaits.
In section 4 the previous results are used in order to obtain the leading-order terms of the energetic efficiency of the two gaits, as well as next-order term for the square gait. Optimal geometry and amplitude are then found for maximal efficiency.
Finally, section 5 offers our conclusions, indicating possible future extensions and consequences.

\section{Problem formulation}
In this section we introduce Purcell's swimmer model, formulate its dynamic equations of motion, define the square and circular gaits and finally, review the concept of Lighthill's energetic efficiency.
The swimmer in question consists of three thin rigid links with lengths $l_0,l_1,l_2$, with $l_1=l_2$ , $l=l_0+l_1+l_2$ and $\eta=\frac{l_0}{l}$. The links are connected by two rotary joints whose angles are denoted by $\phi_1$ and $\phi_2$ (see figure \ref{fig:swimmer}). The shape of the swimmer will be described by these two angles $\vecphi=(\phi_1,\phi_2)^T$. It is assumed that the swimmer's motion is confined to $x-y$ plane. The planar position and orientation of the center link are denoted by $\vecq=(x,y,\theta)^T$ . The swimmer is submerged in an unbounded fluid domain whose motion is governed by Stokes equations under low Reynolds number \cite{happel&brenner_book}.
The velocity of the $i^{th}$ link is described by the linear velocity of its center $\vecv_i$ and the link's angular velocity $\omega_i$ , which are augmented in the vector $\vecV_i=(\vecv_i,\omega_i)\in\mathbb{R}^3$. Similarly, let $\vecV_b=(\vecv_b,\omega_b)$ denote the linear and angular velocities of a body-fixed reference frame. In addition, we define the indicator matrix $I_{ij}$, such that $I_{ij}=\pm 1$ if the location of the $i^{th}$ link with respect to the body frame is affected by the $j^{th}$ joint, with the sign determined by the direction of the joint axis. $I_{ij}=0$ if the $j^{th}$ joint does not affect the $i^{th}$ link. The internal torques (moments) acting on the joints are denoted by $\tau_1$ and $\tau_2$.

\subsection{Formulation of equations of motion}
The kinematic relation between body velocity, joints velocities and links velocities is given by:
\begin{IEEEeqnarray}{c}
\vecv_i=\vecv_b+\omega_b\vecz\times(\vecr_i-\vecr_0)+\sum_j I_{ij}\phid_j\vecz\times(\vecr_i-\vecb_j)\nonumber\\[12pt]
\omega_i=\omega_b+\sum_j I_{ij}\phid_j
\end{IEEEeqnarray}
where $\vecr_i$ is the position of the center of the $i^{th}$ link and $\vecb_j$ is the position of the $j^{th}$ joint.
In matrix form, the velocity $\vecV_i$  is related to the body velocity $\vecV_b$  and shape velocity $\vecphid$  through:
\begin{equation}
 \vecV_i=\vecT_i(\vecq,\vecphi) \dot{\vecq}+\vecE_i(\vecq,\vecphi) \vecphid.
 \label{eq:kin}
\end{equation}
The matrices $\vecT_i(\vecq,\vecphi)$  and $\vecE_i(\vecq,\vecphi)$  for $i=0,1,2$  are given by:
\begin{equation}
\begin{array}{l}
\vecT_0=\mathree{1}{0}{0}{0}{1}{0}{0}{0}{1},\; \vecE_0=\matwothree{0}{0}{0}{0}{0}{0} \\[16pt]
\vecT_1 = \mathree{1}{0}{-0.5 l_o\sin\alpha_0 - 0.5 l_1\sin\alpha_1}{0}{1}{0.5 l_0\cos\alpha_0 + 0.5 l_1\cos\alpha_1}{0}{0}{1}, \vecE_1 = \matwothree{-0.5 l_1\sin\alpha_1}{0}{0.5 l_1\cos\alpha_1}{0}{1}{0}\\[16pt]
\vecT_2 = \mathree{1}{0}{0.5 l_o\sin\alpha_o + 0.5 l_2\sin\alpha_2}{0}{1}{-0.5 l_o\cos\alpha_o - 0.5 l_2\cos\alpha_2}{0}{0}{1}, \vecE_2 = \matwothree{0}{-0.5 l_2\sin\alpha_2}{0}{0.5 l_2\cos\alpha_2}{0}{-1},
\end{array}
\end{equation}
Where $\alpha_0=\theta$, $\alpha_1=\phi_1-\theta$, and $\alpha_2=\theta+\phi_2$ are the absolute orientation angles of each link. Next, we invoke {\em resistive force theory } \cite{cox1970,gray1955propulsion}, which states that the viscous drag force $\vecf_i$  and torque $m_i$  on the $i^{th}$ slender link under planar motion are proportional to its linear and angular velocities. Thus, we can write the expression for the drag force and torque exerted on each link:
\begin{equation}
\begin{array}{c}
\vecf_i=-c_t l_i(\vecv_i\cdot\vect_i)\vect_i-c_n l_i(\vecv_i\cdot\vecn_i)\vecn_i \\[12pt]
m_i=-\dfrac{1}{12} c_n l_i^3 \omega_i,
\end{array}
\label{eq:RFT}
\end{equation}
Where $\vect_i=(\cos\alpha_i,\sin\alpha_i)^T$ is a unit vector in the axial direction of the $i^{th}$ link and $\vecn_i=(-\sin\alpha_i,\cos\alpha_i)^T$  is a unit vector in the normal direction. The resistance coefficients for the normal and axial directions are $c_n\!=\!2\ct\!=4\pi\mu/log(l_c/a)$ where $\mu$  is the dynamic viscosity of the fluid, $a$  is the radius of the links and $l_c$ is a characteristic length. Since the ratio of the link's length to its radius is assumed to be very large, the difference in the resistance coefficients between the links is very small and will be neglected. It is also assumed that effects of hydrodynamic interaction between the links are negligible.
Denoting the vector of forces and torques on the $i^{th}$ link as $\vecF_i=(\vecf_i,m_i)$ the relation (\ref{eq:RFT}) can be written in matrix form
\begin{equation}
\vecF_i=-\mathcal{R}_i(\vecq,\vecphi)\vecV_i
\label{eq:matrixrft}
\end{equation}
 where:
\begin{equation}
\mathcal{R}_i(\vecq,\vecphi)=\ct^{(i)}l_i \mathree{1+\sin^2\alpha_i}{-\cos\alpha_i\sin\alpha_i}{0}{-\cos\alpha_i\sin\alpha_i}{1+\cos^2\alpha_i}{0}{0}{0}{\dfrac{1}{6}l_i^2}
\end{equation}
is called the resistance tensor. The total drag forces and torques acting on the swimmer's body are given by
\begin{equation}
\vecf_b=\sum_{i=0}^2 \vecf_i, \quad m_b=\sum_{i=0}^2 \left(m_i+\left(\left(\vecr_b-\vecr_i\right) \times \vecf_i\right)\cdot\vecz\right).
\label{eq:fbmb}
\end{equation}
Using the matrices $\vecT_i$ from the kinematic relation (\ref{eq:kin}) and augmenting in a vector $\vecF_b=(\vecf_b,m_b)$, \eqref{eq:fbmb} is written in matrix form as:
\begin{equation}
\vecF_b=\sum_{i=0}^2\vecT_i^T\vecF_i =-\sum_{i=0}^2\vecT_i^T\Rsv_i(\vecT_i \vecV_b +\vecE_i\vecphid).
\label{eq:forces}
\end{equation}
Denoting
\begin{equation}
\mathcal{R}_{bb}=\sum_{i=0}^2\vecT_i^T\mathcal{R}_i\vecT_i \quad \text{and} \quad \mathcal{R}_{bu}=\sum_{i=0}^2\vecT_i^T\mathcal{R}_i\vecE_i,
\label{eq:Rbb@Rbu}
\end{equation}
 equation (\ref{eq:forces}) becomes:

\begin{equation}
\label{eq:Fb}
\vecF_b=\mathcal{R}_{bb} \vecV_b +\mathcal{R}_{bu}\vecphid.
\end{equation}
Our choice of coordinates of body location $\vecq$ implies that the body velocity satisfies $\vecqd= \vecV_b$. Assuming quasi-static motion, the swimmer is in static equilibrium $\vecF_b=0$. Substituting this into \eqref{eq:Fb} then yields the nonlinear differential equations that govern the swimmer's dynamics:
\begin{equation}
\vecqd=\Gc(\vecq,\vecphi)\vecphid
\label{eq:dynamics}
\end{equation}
where
\begin{equation}
\Gc(\vecq,\vecphi)=-\mathcal{R}_{bb}^{-1}
\mathcal{R}_{bu}
\label{eq:GandR}
\end{equation}
Since there are no external boundary conditions in the unbounded fluid domain, the swimmer's velocity expressed in body-fixed reference frame is independent of the position $\vecq$ and thus  \eqref{eq:dynamics} can be written as (cf. \cite{Gutman2015Symmetries}):
\begin{equation}\label{eq:gauge_sym}
\begin{array}{l}
\vecqd=\vecD(\theta)\vecG(\vecphi)\vecphid, \\[15pt]
\mbox{where } \vecD(\theta) = \mathree{\cos\theta}{-\sin\theta}{0}{\sin\theta}{\cos\theta}{0}{0}{0}{1}
\end{array}
\end{equation}
Note that \eqref{eq:gauge_sym} implies that the angular velocity of the swimmer $\thetad$  is independent of $\theta$.
The matrix $\vecG(\vecphi)$ obeys some symmetry relations due to the swimmer's structure (see \cite{Gutman2015Symmetries} for details):
\begin{IEEEeqnarray}{rCl}
\vecG(-\vecphi)&=&\mathree{-1}{0}{0}{0}{1}{0}{0}{0}{1}\vecG(\vecphi) )\nonumber\\
\vecG(\vecS\vecphi)&
=&\mathree{-1}{0}{0}{0}{1}{0}{0}{0}{-1}\vecG(\vecphi)\vecS
\end{IEEEeqnarray}
Where $\vecS=\matwo{0}{1}{1}{0}$  represents interchanging between the joint angles $\phi_1,\phi_2$  .

Next, we compute the actuation torques acting at the joints,
which are denoted by the vector $\vectau=(\tau_1,\tau_2)^T$. Due to static equilibrium of the partial kinematic chain ending at the $j^{th}$ joint, these torques are balanced by hydrodynamic forces and torques $\vecf_i$ and $m_i$, giving rise to the following relation:
\begin{equation}
\tau_j=-\sum_iI_{ij}\left( m_i+\left( \left(\vecr_i-\vecb_j\right) \times\vecf_i\right)\cdot \vecz \right)
\end{equation}
This relation can be written in matrix form using the previously defined matrices $\vecE_i$:
\begin{equation}
\vectau=-\sum_{i=0}^2\vecE_i^T\vecF_i
\label{eq:torque}
\end{equation}
Substituting (\ref{eq:kin}), (\ref{eq:matrixrft}) and (\ref{eq:dynamics}) in (\ref{eq:torque}) gives:
\begin{IEEEeqnarray}{rCl}
\vectau &=& \sum_{i=0}^2\vecE_i^T\mathcal{R}_i\vecV_i\nonumber\\
		&=& \sum_{i=0}^2\vecE_i^T\mathcal{R}_i\left(\vecT_i\vecqd_i+\vecE_i\vecphid_i\right)\nonumber\\
		&=&
\left(\sum_{i=0}^2\vecE_i^T\mathcal{R}_i\vecE_i-\sum_{i=0}^2\vecE_i^T\mathcal{R}_i\vecT_i\left(\sum_{i=0}^2\vecT_i^T\mathcal{R}_i\vecT_i\right)^{-1}\sum_{i=0}^2\vecT_i^T\mathcal{R}_i\vecE_i\right)\vecphid
\end{IEEEeqnarray}
denoting $\mathcal{R}_{uu}=\sum_{i=0}^2\vecE_i^T\mathcal{R}_i\vecE_i$ this equation can be written as:
\begin{equation}
	\vectau=\left(\mathcal{R}_{uu}-\mathcal{R}_{bu}^T\mathcal{R}_{bb}^{-1}\mathcal{R}_{bu}\right) \vecphid
\end{equation}
with $\mathcal{R}_{bb}$ and $\mathcal{R}_{bu}$ as defined earlier in (\ref{eq:Rbb@Rbu}).
Denoting
\begin{equation}
\vecW(\vecphi)=\mathcal{R}_{uu}-\mathcal{R}_{bu}^T\mathcal{R}_{bb}^{-1}\mathcal{R}_{bu}
\label{eq:W(phi)}
\end{equation}
 we have:
\begin{equation} \label{eq:tau_vs_phid}
	\vectau=\vecW(\vecphi)\vecphid.
\end{equation}
The resulting dynamic equations in \eqref{eq:dynamics} and \eqref{eq:tau_vs_phid} are strongly nonlinear. Nevertheless, they can be expanded by assuming small-amplitude changes of the angles about $\vecphi=0$, as explained in Section 3.

\subsection{Periodic Gaits}
This work considers time-periodic inputs of shape changes $\vecphi(t)$, called {\em gaits}, which represent closed loops in the plane of joint angles. In particular, we focus on two specific possible gaits, square and circular (Figure \ref{fig:gaits}), that are presented here with $\varepsilon$ as a scaling factor of the stroke amplitude, which will later be assumed small.
The time function of the relative angles between the links can be written as $\phi_i(t)=\varepsilon s_i(t)$ , where $s_i$  represents the ``unscaled'' shape trajectory. Let us also define the vector $\vecs=(s_1,s_2)^T$ .
For the cases of square and circular gaits, the joint angles can be written in the unscaled form as:
\begin{equation}
square: \quad s_1(t)=\left\lbrace
\begin{array}{rl}
1, & t \in [0,2]\\
(3-t), & t \in [2,4]\\
-1, & t \in [4,6]\\
(t-7), & t \in [6,8]
\end{array}\right.,
\quad
s_2(t)=\left\lbrace
\begin{array}{rl}
(1-t), & t \in [0,2]\\
-1, & t \in [2,4]\\
(t-5), & t \in [4,6]\\
1, & t \in [6,8]
\end{array}\right.
\label{eq:squaregait}
\end{equation}
\begin{equation}
circular: \quad s_1(t)=\sin\left(t+\tfrac{\pi}{4}\right), \quad s_2(t)=\cos\left(t+\tfrac{\pi}{4}\right), \qquad t \in [0,2\pi]
\label{eq:circlegait}
\end{equation}
These equations describe circular and square-shaped trajectories with stroke amplitude of 1, which are then scaled to stroke of $\varepsilon$ by setting $\phi_i(t)=\varepsilon s_i(t)$.
Since the equation of motion \eqref{eq:dynamics} is time invariant, the net motion is independent of time parametrization of the gait. Here, time parametrization of the gaits was chosen arbitrarily so that the period times are $T=8$ for the square gait and $T=2\pi$  for the circular one. 
The net displacement in the $x$ direction of one full period will be denoted $X$ and is calculated through $X=\int_0^T\dot{x}dt$. The mean swimming speed is denoted by $V=X/T$.

\subsection{Mechanical power and Efficiency}

The energetic efficiency of a stroke can be defined in a several different ways, cf. \cite{childress.JFM2012}. First, we write an expression for the power exerted by the swimmer.
The mechanical power dissipated by the fluid's viscous drag forces and torques on all three links is,
\begin{equation} \label{eq:power} P=-\sum_{i=0}^2\vecF_i^T\vecV_i=\sum_{i=0}^2\vecV_i^T\mathcal{R}_i\vecV_i.
\end{equation}
On the other hand, the mechanical power expended internally by the actuation torques is
\begin{equation}
	P=\vectau^T\vecphid=\vecphid^T\vecW\vecphid. \label{eq:power2}
\end{equation}
These last two expressions are equivalent, which can be proven using the relations \eqref{eq:W(phi)} and \eqref{eq:tau_vs_phid}. The total work can be calculated from here by $W=\int_0^TP(t)dt$.

Due to time-invariance of the swimmer's dynamics, a well-known observation (cf. \cite{becker2003self,tam2007optimal}) states that the energy expenditure under a given gait trajectory can be made arbitrarily small by pacing along the trajectory in a sufficiently slow rate. Thus, energy per unit distance cannot serve as a reasonable performance measure. Following \cite{becker2003self,tam2007optimal},  we use an energetic efficiency criterion similar to that defined by Lighthill in \cite{lighthill1975mathematica} which is the ratio of average power $\bar{P}=\frac{W}{T}$ exerted by the swimmer to the power needed to drag the swimmer as a rigid body at the same mean speed:
\begin{equation}
	\tilde{\xi}=\dfrac{c_t l V^2}{\bar{P}}=\dfrac{c_t l X^2}{\bar{P}T^2}.
	\label{eq:eff_average}
\end{equation}
This definition is non-unique for a given gait trajectory $\vecphi(\sigma(t))$ and by varying the gait's time parametrization $\sigma(t)$ the average power changes as well. Nevertheless, a known result from \cite{becker2003self} proves that minimal average power $\bar{P}$ for a given gait is obtained by choosing a time parametrization $\sigma(t)$ for which the instantaneous power $P(t)$ is kept constant. Using this particular time parametrization (which is unique up to multiplying by a positive factor) maximizes the energetic efficiency $\xi$ for a given gait, and this efficiency is uniquely determined for any gait trajectory.

The period time $T$ under a constant mechanical power $P=P_o$ can be found using the following derivation. Consider the gait $\vecphi(\sigma)=\vecphi(\sigma(t))$ where $\sigma \in [\sigma_0,\sigma_1]$ is a geometric parameter along the gait's trajectory and $\sigma(t)$ is its time parametrization. Using \eqref{eq:power2}, the instantaneous power $P(t)$ can be written as
\begin{equation}
	P(\sigma(t))=\left (\vecphi'(\sigma)^T \vecW(\sigma)\vecphi'(\sigma)\right )\dot{\sigma}^2 \quad\text{where,}\quad \vecphi'(\sigma)=\dfrac{\partial\vecphi}{\partial \sigma}
\end{equation}
denoting $F(\sigma)=\vecphi'(\sigma)\vecW(\sigma)\vecphi'(\sigma)$ we have
\begin{equation}
	\dot{\sigma}=\dfrac{d \sigma}{dt}=\sqrt{\dfrac{P}{F(\sigma)}} \to dt=\sqrt{\dfrac{F(\sigma)}{P}}d \sigma. \label{eq:sigmaP}
\end{equation}
This transformation is well-defined for any non-degenerate trajectory and time parametrizations such that $\vecphi'(\sigma)\neq 0$ and $d\sigma/dt>0$ for all $\sigma$ and $t$.
The period time of a gait under constant mechanical power $P=P_o$ is thus obtained as
\begin{equation}
	T_{p_o}=\frac{1}{\sqrt{P_o}}\int_{\sigma_0}^{\sigma_1}\sqrt{F(\sigma)}d \sigma
	\label{eq:Time_const_power}
\end{equation}
Substituting \eqref{eq:Time_const_power} into the expression for Lighthill's efficiency (\ref{eq:eff_average}) under a constant power $\bar P =P_o$, it is clear that the $P_o$ cancels out and so the calculations can be done for $P_o=1$ without loss of generality.
Thus, Lighthill's efficiency of a gait $\vecphi(\sigma)$ is uniquely given as:
\begin{equation}
	\xi=\dfrac{X^2}{T_p^2},
	\label{eq:eff}
\end{equation}
where $T_p$ now denotes the period time under constant power of $P_o=1$, as given in \eqref{eq:Time_const_power}, and the constants $c_t l$ are dropped for convenience.

\section{Perturbation expansion of the dynamics}

In order to find the leading-order expressions for the swimmer's motion using perturbation expansion \cite{nayfeh2008perturbation}, the dynamics of the swimmer are expanded as power series of the stroke amplitude $\varepsilon$. First, expansion of the swimmer's orientation angle $\theta(t)$ is obtained, followed by expansions of the instantaneous body velocity and of the resulting net displacement $X$ under each specific gait.


\subsection{Expansion of swimmer displacement}

The position and orientation of the swimmer are now expanded into a power series in $\vep$ as:
\begin{equation}
\vecq(t)=\varepsilon\vecq^{(1)}(t)+\varepsilon^2\vecq^{(2)}(t)+\varepsilon^3\vecq^{(3)}(t)+\ldots
\end{equation}
And for each coordinate:
\begin{equation}
x(t)=\varepsilon x^{(1)}(t)+\varepsilon^2 x^{(2)}(t)+\varepsilon^3\ x^{(3)}(t)+\ldots
\end{equation}
\begin{equation}
y(t)=\varepsilon y^{(1)}(t)+\varepsilon^2 y^{(2)}(t)+\varepsilon^3\ y^{(3)}(t)+\ldots
\end{equation}
\begin{equation}
\theta(t)=\varepsilon \theta^{(1)}(t)+\varepsilon^2 \theta^{(2)}(t)+\varepsilon^3\ \theta^{(3)}(t)+\ldots
\end{equation}
As mentioned, from the matrix $D(\theta)$  in eq. (\ref{eq:gauge_sym}) it can be seen that $\thetad$  is independent of $\theta$ , and thus, the ODE for $\theta$  can be written as:
\begin{equation}
\thetad=\sum_{j=1}^2 \vecG_{3j}(\vecphi)\phid_j=\varepsilon \sum_{j=1}^2 \vecG_{3j}(\vecphi)\dot s_j
\end{equation}
Using Taylor expansion we get:
\begin{IEEEeqnarray}{rCl}
\thetad &=& \sum_{j=1}^2 \Bigg[\vecG_{3j}(0)+\left(\phi_1\frac{\partial}{\partial\phi_1}\Big|_{(0,0)}+\phi_2\frac{\partial}{\partial\phi_2}\Big|_{(0,0)}\right)\vecG_{3j}\phantom{\Bigg(\Bigg)^2}\Bigg.\nonumber\\
&&\Bigg.+\frac{1}{2!}\left(\phi_1\frac{\partial}{\partial\phi_1}\Big|_{(0,0)}+\phi_2\frac{\partial}{\partial\phi_2}\Big|_{(0,0)}\right)^2\vecG_{3j}+\ldots\Bigg]\phid_j\label{eq:theta}
\end{IEEEeqnarray}
Substituting the expansion for $\theta$  and $\phi_i(t)=\varepsilon s_i(t)$ :
\begin{IEEEeqnarray}{l}
\varepsilon \thetad^{(1)}+\varepsilon^2 \thetad^{(2)}+\varepsilon^3\ \thetad^{(3)}+\ldots =\nonumber\\
 \sum_{j=1}^2 \Bigg[\vecG_{3j}(0)+\varepsilon\left(s_1\frac{\partial}{\partial\phi_1}\Big|_{(0,0)}+s_2\frac{\partial}{\partial\phi_2}\Big|_{(0,0)}\right)\vecG_{3j}\phantom{\Bigg(\Bigg)^2}\Bigg.\nonumber\\
\Bigg.+\varepsilon^2\frac{1}{2!}\left(s_1\frac{\partial}{\partial\phi_1}\Big|_{(0,0)}+s_2\frac{\partial}{\partial\phi_2}\Big|_{(0,0)}\right)^2\vecG_{3j}+\ldots\Bigg]\varepsilon \dot s_j
\end{IEEEeqnarray}
Now we can collect terms of different orders in $\varepsilon$.
\begin{equation}
\thetad^{(1)}(t)=\sum_{j=1}^2\vecG_{3j}(0) \dot s_j(t)
\label{eq:thetadot1}
\end{equation}
\begin{equation}
\thetad^{(2)}(t)=\sum_{j=1}^2 \left( \frac{\partial\vecG_{3j}}{\partial\phi_1}(0)s_1 + \frac{\partial\vecG_{3j}}{\partial\phi_2}(0)s_2 \right) \dot s_j(t)
\label{eq:thetadot2}
\end{equation}
\begin{equation}
\thetad^{(3)}(t)=\sum_{j=1}^2 \left( \frac{1}{2}\frac{\partial^2\vecG_{3j}}{\partial\phi_1^2}(0)s_1^2 + \frac{1}{2}\frac{\partial^2\vecG_{3j}}{\partial\phi_2^2}(0)s_2^2 + \frac{\partial^2\vecG_{3j}}{\partial\phi_1\partial\phi_2}(0)s_1s_2\right) \dot s_j(t)
\label{eq:thetadot3}
\end{equation}
Explicit expressions for the derivatives are given in the supplementary document.
The first-order derivatives at the origin are zero since $\vecG_{31},\vecG_{32}$ are even functions in $(\phi_1,\phi_2)$  and so  $\thetad^{(2)}=0$ in \eqref{eq:thetadot2}, implying that $\theta^{(2)}(t)=0$ is zero (under zero initial conditions).
Now, using the expression for $\theta(t)$  we can calculate expansions of the net motion $\vecq(t)$. First, we expand the matrices in \eqref{eq:gauge_sym} as:
\begin{equation}
\begin{array}{c}
\vecD(\theta)=\vecI+\theta\vecJ+\frac{1}{2!}\theta^2\vecJ^2+\frac{1}{3!}\theta^3\vecJ^3+\ldots\\[12pt]
\vecJ=\mathree{0}{-1}{0}{1}{0}{0}{0}{0}{0}
\end{array}
\end{equation}
\begin{equation}
\begin{array}{rcl}
\vecG(\vecphi) &=& \vecG(0)+\varepsilon\left(s_1\dfrac{\partial}{\partial\phi_1}\Big|_{(0,0)}+s_2\dfrac{\partial}{\partial\phi_2}\Big|_{(0,0)}\right)\vecG \\[12pt]
&& +\varepsilon^2\frac{1}{2!}\left(s_1\dfrac{\partial}{\partial\phi_1}\Big|_{(0,0)}+s_2\dfrac{\partial}{\partial\phi_2}\Big|_{(0,0)}\right)^2\vecG+\ldots
\end{array}
\end{equation}
This, of course, is the same process done for the last row of $\vecG$  
in (\ref{eq:theta}).
Expanding equation (\ref{eq:gauge_sym}) by substituting expansions for $\theta(t)$, $\vecD(\theta)$, $\vecG(\vecphi)$  and also of $\vecq$ and $\phi_i(t)=\varepsilon s_i(t)$ and collecting orders of $\varepsilon$, one obtains:
\begin{equation}
\resizebox{\textwidth}{!}{$\begin{array}{lccl}
\varepsilon^1: & \vecqd^{(1)} &=& \vecG(0)\dot{\vecs}(t)\\[12pt]
\varepsilon^2: & \vecqd^{(2)} &=& \left[\theta^{(1)}\vecJ\vecG(0)+ \left(s_1\dfrac{\partial}{\partial\phi_1}\Big|_{(0,0)}+s_2\dfrac{\partial}{\partial\phi_2}\Big|_{(0,0)}\right)\vecG\right]\dot{\vecs}(t)\\[12pt]
\varepsilon^3: & \vecqd^{(3)} &=& \Bigg[
\frac{1}{2}\theta^{(1)2}\vecJ^2\vecG(0)+ \theta^{(1)}\vecJ\left(s_1\dfrac{\partial}{\partial\phi_1}\Big|_{(0,0)}+s_2\dfrac{\partial}{\partial\phi_2}\Big|_{(0,0)}\right)\vecG\phantom{\Bigg(\Bigg)^2}\Bigg.\\[10pt]
&&&+\Bigg.\frac{1}{2}\left(s_1\dfrac{\partial}{\partial\phi_1}\Big|_{(0,0)}+s_2\dfrac{\partial}{\partial\phi_2}\Big|_{(0,0)}\right)^2\vecG
\Bigg]\dot{\vecs}(t)\\[12pt]
\varepsilon^4: & \vecqd^{(4)} &=& \Bigg[ \left(\theta^{(3)}\vecJ+\frac{1}{3!}\theta^{(1)3}\vecJ^3\right)\vecG(0)+
\frac{1}{2}\theta^{(1)2}\vecJ^2\left(s_1\dfrac{\partial}{\partial\phi_1}\Big|_{(0,0)}+s_2\dfrac{\partial}{\partial\phi_2}\Big|_{(0,0)}\right)\vecG\Bigg.\\[10pt]
&&&+ \frac{1}{2}\theta^{(1)}\vecJ\left(s_1\dfrac{\partial}{\partial\phi_1}\Big|_{(0,0)}+s_2\dfrac{\partial}{\partial\phi_2}\Big|_{(0,0)}\right)^2+\Bigg.\frac{1}{6}\left(s_1\dfrac{\partial}{\partial\phi_1}\Big|_{(0,0)}+s_2\dfrac{\partial}{\partial\phi_2}\Big|_{(0,0)}\right)^3\vecG
\Bigg]\dot{\vecs}(t)\\[12pt]
\end{array}$}
\label{eq:velocities}
\end{equation}
Explicit expressions for the derivatives of elements of $\vecG$ are given in the supplementary document. Due to symmetry of the swimmer and of the gaits, it is known that there will only be net translation along $x$ direction, which is the axis of the center link, while rotation and translation in $y$ direction are cancelled out \cite{Gutman2015Symmetries}. Thus, expansions of net displacement $X$ for specific gaits are calculated next.

\subsection{Gait-specific expressions of the displacement}
Once the series expansion for the instantaneous velocities of the swimmer is derived, it is possible to obtain the net displacement $X$ over a period for a specific gait via integration. The following proposition summarizes the results for the square and circular gaits:

\begin{prop}\label{th:x_disp}
For a symmetric, three linked "Purcell swimmer" performing a square or circular gait with amplitude $\varepsilon$, the leading-order term and next-order correction for the displacement $X$ over one full stroke in the direction of $x$ axis  are given as:
\begin{equation}
X=f_2(\eta)\varepsilon^2-f_4(\eta)\varepsilon^4+O(\varepsilon^6)
\label{eq:disp_expansion}
\end{equation}
With:
\begin{IEEEeqnarray}{l}
f_2(\eta)=C_2\eta l(1-\eta)^3(\eta+3)\nonumber\\
f_4(\eta)=C_4\eta l(1-\eta)^3(\eta^7+3\eta^6-10\eta^5-22\eta^4+29\eta^3+95\eta^2+44\eta+20)
\label{eq:disp_f2f4}
\end{IEEEeqnarray}
Where, for the square gait given in (\ref{eq:squaregait}) we have $C_2=1/4,C_4=1/192$, and for the circular gait given in (\ref{eq:circlegait}) $C_2=\pi/16,C_4=\pi/1024$.
\end{prop}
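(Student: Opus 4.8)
The plan is to integrate the velocity expansion (\ref{eq:velocities}) term by term over one period, writing $X=\int_0^T\dot x\,dt=\sum_{k\ge1}\vep^k X^{(k)}$ with $X^{(k)}=\int_0^T\dot x^{(k)}\,dt$, and to use the symmetry relations of $\vecG$ both to kill the odd-order contributions and to simplify the surviving integrals. The first observation is that the symmetry $\vecG(-\vecphi)=\mathrm{diag}(-1,1,1)\vecG(\vecphi)$ evaluated at $\vecphi=0$ forces the first row of $\vecG(0)$ to vanish, so $\dot x^{(1)}=\sum_j\vecG_{1j}(0)\dot s_j\equiv0$ and the series starts at order $\vep^2$, consistent with (\ref{eq:disp_expansion}).

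Before computing the coefficients I would establish that $X$ is an \emph{even} function of $\vep$, which disposes of all odd orders (in particular the $\vep^3$ term) in one stroke. To see this, note that if $(x(t),y(t),\theta(t))$ solves the reduced dynamics (\ref{eq:gauge_sym}) from the origin under the gait $\vecphi=\vep\vecs(t)$, then a direct substitution --- using $\vecG(-\vecphi)=\mathrm{diag}(-1,1,1)\vecG(\vecphi)$ together with the fact that conjugating $\vecD(\theta)$ by $\mathrm{diag}(1,-1,1)$ sends $\theta\mapsto-\theta$ --- shows that $(x(t),-y(t),-\theta(t))$ solves the same dynamics under the sign-reversed gait $-\vep\vecs(t)$. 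The two motions share the identical $x$-coordinate, so $X(-\vep)=X(\vep)$ and $X$ contains only even powers of $\vep$. Hence $X=f_2\vep^2-f_4\vep^4+O(\vep^6)$ with $X^{(2)}=f_2$, $X^{(3)}=0$ and $X^{(4)}=-f_4$, and it remains to evaluate the two even coefficients.

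For the leading coefficient I would integrate the second-order velocity (the $\vep^2$ line of (\ref{eq:velocities})). Using periodicity to discard exact differentials, $\oint s_i\dot s_i\,dt=0$ and $\oint s_i\dot s_j\,dt=-\oint s_j\dot s_i\,dt$, every symmetric combination cancels and $X^{(2)}$ collapses to a constant multiple of the signed area $\tfrac12\oint(s_1\,ds_2-s_2\,ds_1)$ enclosed by the gait --- the familiar geometric-phase/Stokes picture. Substituting the explicit origin-derivatives of $\vecG$ from the supplementary document yields the area-independent factor $\tfrac1{16}\eta l(1-\eta)^3(\eta+3)$; since the enclosed area has magnitude $4$ for the square gait (\ref{eq:squaregait}) and $\pi$ for the circular gait (\ref{eq:circlegait}), this reproduces $f_2$ with $C_2=1/4$ and $C_2=\pi/16$ respectively.

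The real work, and the main obstacle, is the fourth-order coefficient $X^{(4)}=-f_4$. This requires first computing $\theta^{(3)}(t)$ by integrating (\ref{eq:thetadot3}) (recall $\theta^{(2)}\equiv0$), and then integrating the $\vep^4$ line of (\ref{eq:velocities}), whose four groups of terms couple $\theta^{(3)}$, $\theta^{(1)}$ and its powers to derivatives of $\vecG$ up to third order. After repeated integration by parts and substitution of the explicit Taylor data for $\vecG$ and of the gait-specific moments $\oint s_i^a s_j^b\dot s_k\,dt$, one is left with the degree-ten polynomial in (\ref{eq:disp_f2f4}). I expect this to be a lengthy but mechanical symbolic computation; the two built-in consistency checks are that the $\vep^3$ contribution must vanish (as guaranteed by the evenness argument above) and that the resulting $\eta$-polynomial must emerge \emph{identical} for the two gaits --- a reflection of the shared swap-symmetry of (\ref{eq:squaregait}) and (\ref{eq:circlegait}) --- with only the scalar prefactor $C_4$ ($1/192$ for the square, $\pi/1024$ for the circle) distinguishing them.
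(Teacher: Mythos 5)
Your proposal is correct in outline and, for the hard part, ends up on the same road as the paper, but it packages the preliminary steps differently. The paper's own proof of this proposition is a direct, gait-specific computation: it substitutes the square (resp.\ circular) gait into \eqref{eq:thetadot1}--\eqref{eq:thetadot3}, integrates to get $\theta^{(1)}(t)$ and $\theta^{(3)}(t)$ explicitly (exploiting the four-fold symmetry to work on a quarter period for the square gait), substitutes into \eqref{eq:velocities} to get explicit $\dot x^{(2)}(t)$ and $\dot x^{(4)}(t)$, and integrates. Odd orders are not killed in advance; they are simply observed to vanish. Your two structural additions are both valid and arguably cleaner: (i) the evenness argument $X(-\vep)=X(\vep)$ via the reflection $(x,y,\theta)\mapsto(x,-y,-\theta)$ checks out against the symmetry $\vecG(-\vecphi)=\mathrm{diag}(-1,1,1)\vecG(\vecphi)$ and $\mathrm{diag}(1,-1,-1)\vecD(\theta)=-\vecD(-\theta)\,\mathrm{diag}(-1,1,1)$, and disposes of all odd orders at once (the paper only argues $\theta^{(2)}=0$ from evenness of $G_{3j}$); (ii) your integration-by-parts reduction of $X^{(2)}$ to $(C_{12}-C_{21})\oint s_1\dot s_2\,dt$ is precisely the argument the paper uses to prove Theorem \ref{th:global_eta} in Appendix C, and the numbers check: $C_{12}-C_{21}=-\tfrac{1}{16}\eta l(1-\eta)^3(\eta+3)$ and $\oint s_1\dot s_2\,dt=-4$ (square) or $-\pi$ (circle) reproduce $C_2=1/4$ and $\pi/16$.

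The one substantive shortfall is that the fourth-order coefficient --- which is the actual content of the proposition, since $f_4$ is the new result --- is only described procedurally, not computed. That is acceptable as a plan (the computation genuinely is mechanical once $\theta^{(3)}$ and the Taylor data of $\vecG$ are in hand), but be careful with your second ``consistency check'': the claim that the swap-symmetry of the two gaits \emph{forces} the same degree-ten $\eta$-polynomial at fourth order is not established, and the paper explicitly warns that an elliptical gait yields a \emph{different} $f_4(\eta)$. The coincidence of the polynomials for the square and circular gaits is something to be verified at the end of the computation, not assumed as a check that would let you compute only one of the two cases.
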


\noindent
The proof of this proposition is given in the supplementary document. It can easily be seen that the leading order term of $X$ is quadratic in $\vep$ and that the next-order corrections are of opposite sign to the leading-order terms. This implies that the displacement grows monotonically with $\vep$ for small stroke amplitudes, whereas for larger amplitudes with $\vep>1$ the $O(\vep^4)$-term causes reversal in the direction of net motion. This, in turn, indicates the existence of a locally optimal value of $\vep$ that achieves maximal displacement. Note that, a leading-order term for displacement under the square gait is found in \cite{giraldi2015optimal} using Lie brackets, and the results are identical to those given here.

\begin{figure}[!b]
\vspace{-10pt}
    \centering
    \begin{subfigure}[h]{0.49\textwidth}
    \includegraphics[width=\textwidth]{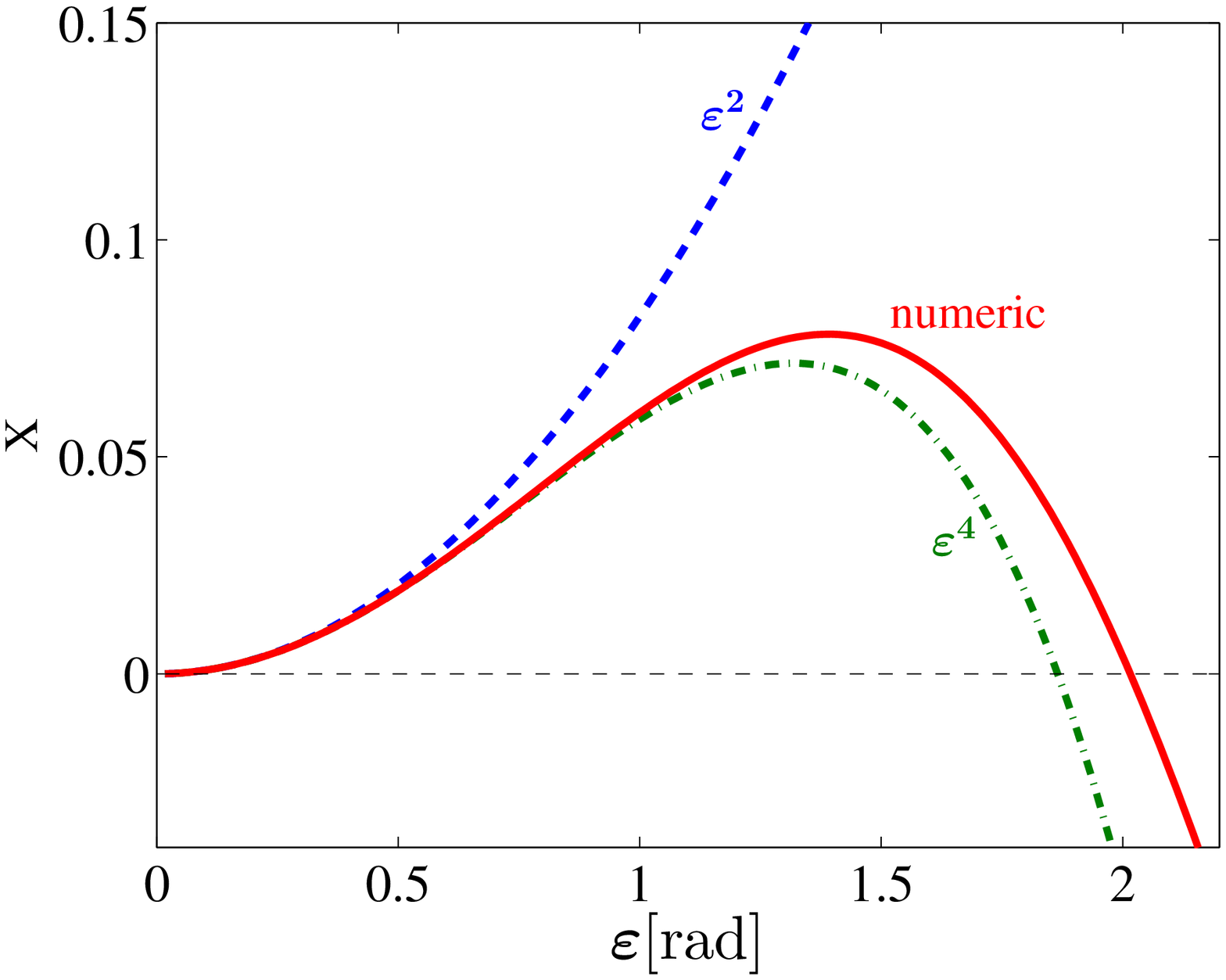}
    \caption{}
    \end{subfigure}
    \begin{subfigure}[h]{0.49\textwidth}
    \includegraphics[width=\textwidth]{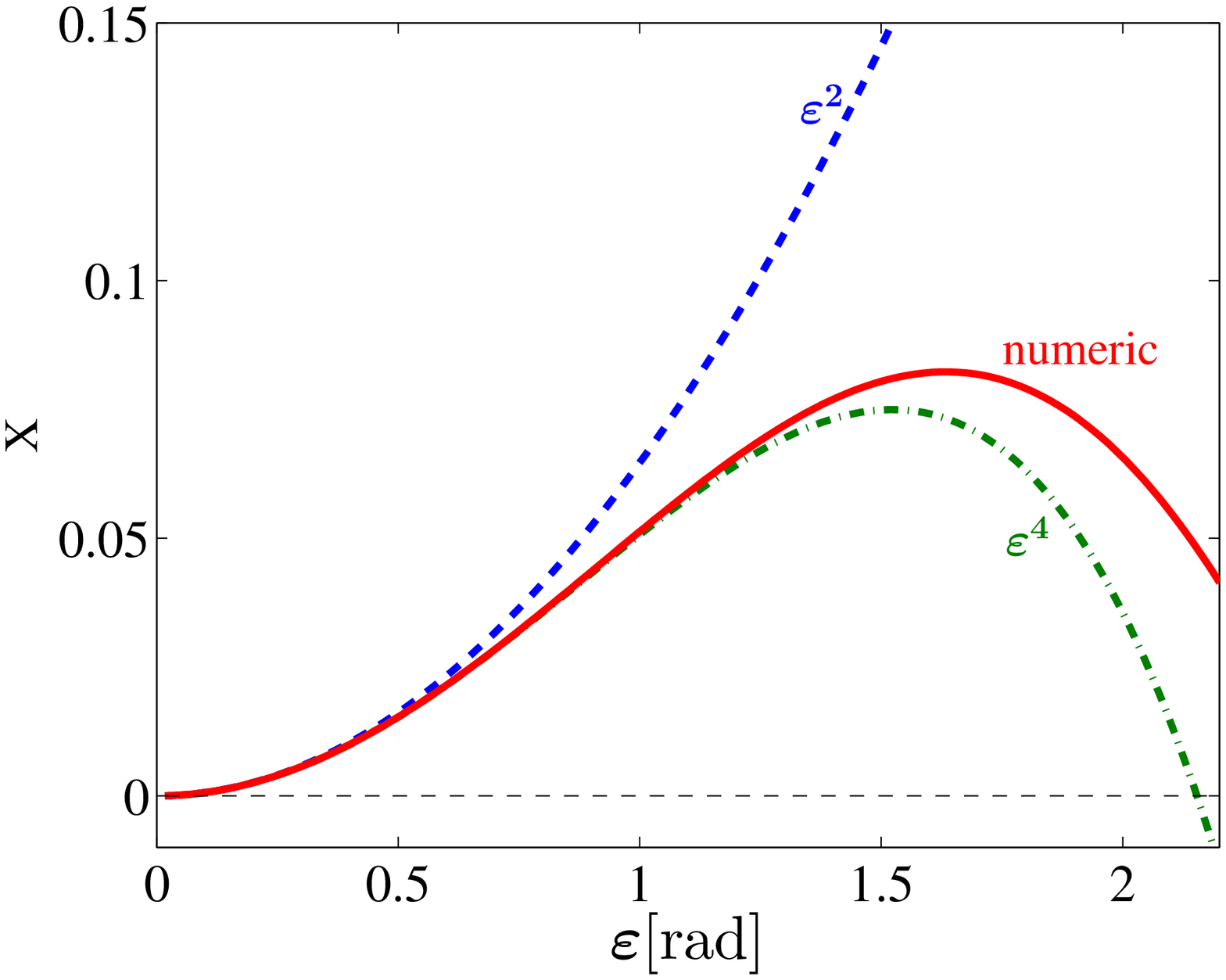}
    \caption{}
    \end{subfigure}
    \caption{Displacement over one cycle $X$ for $\eta=1/3$ as a function of stroke amplitude $\vep$ for (a) Square gait, (b) Circular gait. Solid curves - numerical integration. Dashed curves - $O(\vep^2)$ approximation. Dash-dotted curves - $O(\vep^2)$ approximation.}
    \label{fig:trans_ep}
\end{figure}

As a demonstration of the utility of Proposition \ref{th:x_disp}, Figure \ref{fig:trans_ep} shows plots of the displacement $X$ vs. stroke amplitude $\varepsilon$ with link ratio of $\eta=1/3$ for both gaits (a - square, b - circular). The solid curves are obtained from numerical integration of the nonlinear equations of motion \eqref{eq:dynamics}, whereas the dashed and dash-dotted curves are $O(\vep^2)$ and $O(\vep^4)$ approximations, respectively. While the $O(\vep^2)$-approximation is monotonic in $\vep$ and works only for small stroke amplitudes, the $O(\vep^4)$-approximation with next-order correction captures the reversal in the direction of the displacement for intermediate amplitudes, though there is an increasing deviation from the numerical values for larger amplitudes.

\subsection{Optimal geometry and stroke amplitude for maximal displacement}
\label{sec:Optimal geometry for maximal displacement}
First, we discuss the dependence of net displacement $X$ on the swimmer's geometric ratio $\eta$ and derive its locally optimal values for both gaits, based on equations \eqref{eq:disp_expansion} and \eqref{eq:disp_f2f4}. As a demonstration, plots of $X$ vs. $\eta$ under the square gait are shown in Figures \ref{fig:trans_vs_eta}(a) and \ref{fig:trans_vs_eta}(b) for stroke amplitudes of $\varepsilon=\pi/4$ and $\varepsilon=\pi/2$, respectively. It can be seen that the $O(\vep^2)$ approximation (dashed) has a larger deviation from the exact value obtained by numerical integration (solid curves), compared to that of the  $O(\vep^4)$ approximation (dash-dotted), where the deviations are further increased for larger amplitudes $\vep$. Nevertheless, in all cases it is obvious that the displacement $X$ vanishes at extreme cases of $\eta \to 0$ or $\eta \to 1$, where either the middle link or side links vanish, and that an optimal value of $\eta$ exists that maximizes the displacement $X$.

\begin{figure}[!t]
    \centering
    \begin{subfigure}{0.49\textwidth}
    \includegraphics[width=\textwidth]{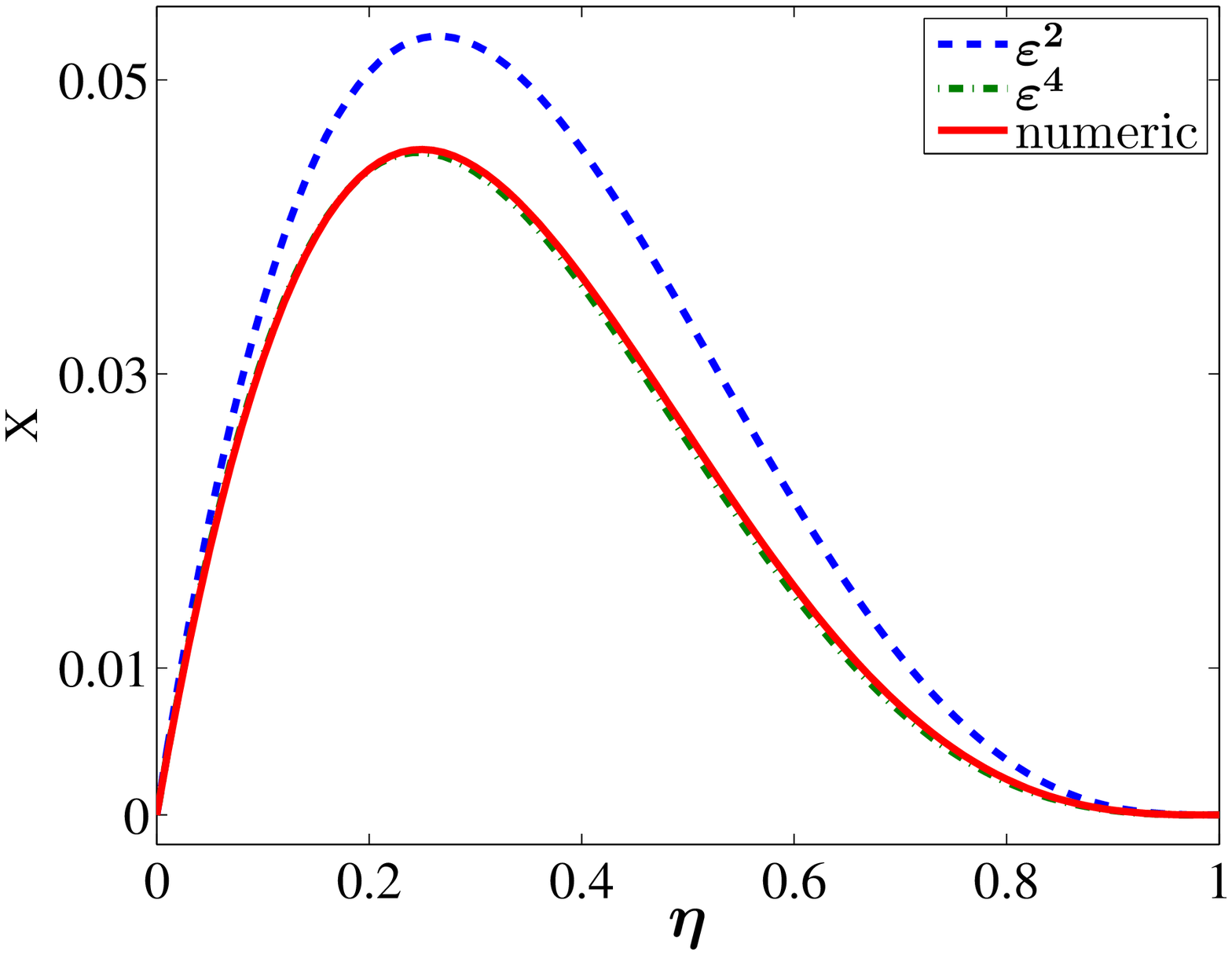}
    \caption{}
    \label{fig:trans_vs_eta_eps_qrtrpi}
    \end{subfigure}
    \begin{subfigure}{0.49\textwidth}
    \includegraphics[width=\textwidth]{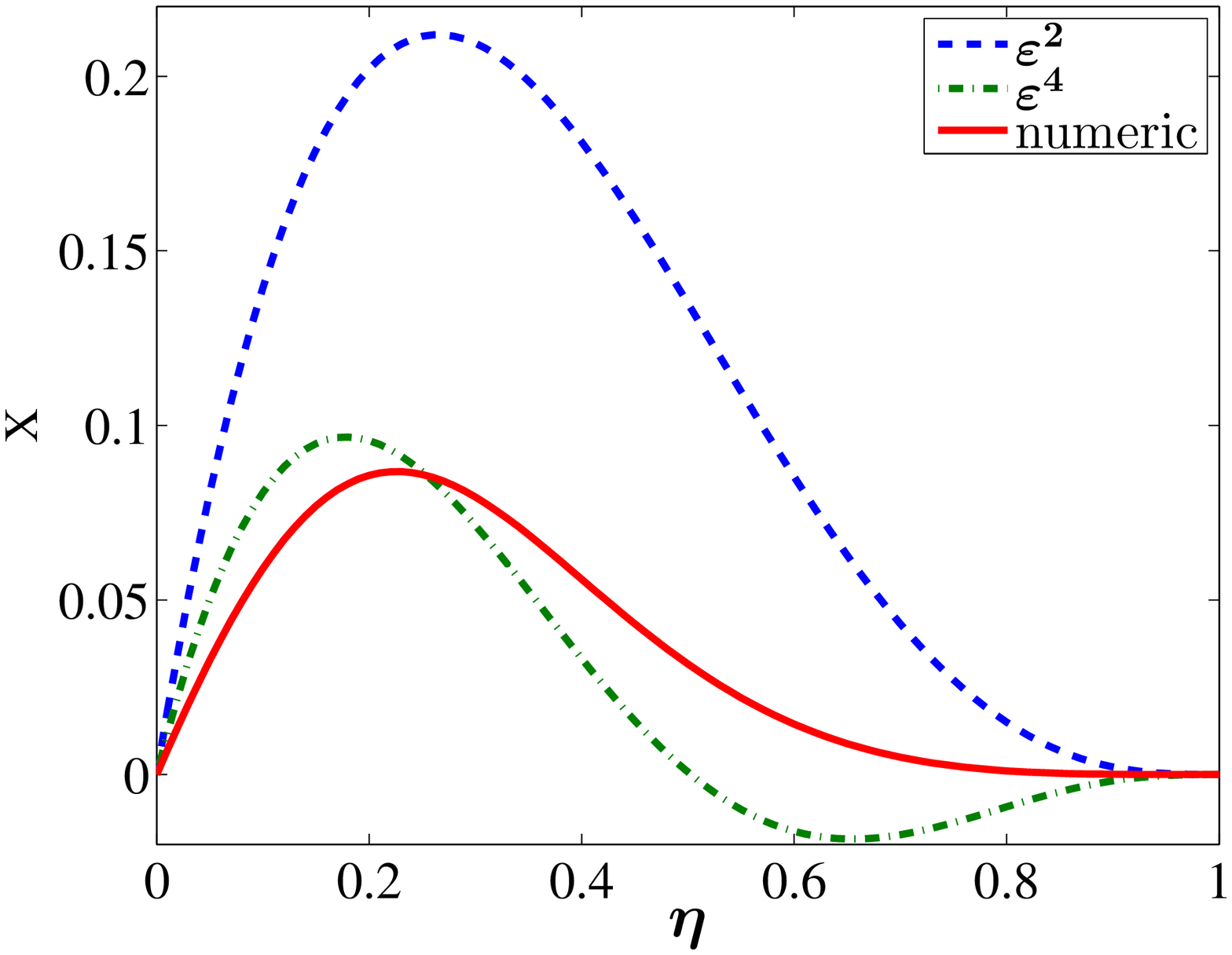}
    \caption{}
    \label{fig:trans_vs_eta_eps_halfpi}
    \end{subfigure}
    \caption{Displacement over one cycle $X$ as a function of $\eta$ under the square gait with stroke amplitudes of (a) $\varepsilon=\pi/4$, (b) $\varepsilon=\pi/2$. Solid curves - numerical integration. Dashed curves - $O(\vep^2)$ approximation. Dash-dotted curves - $O(\vep^4)$ approximation.}
    \label{fig:trans_vs_eta}
\vspace{-10pt}
\end{figure}

We now obtain approximations of the optimal value of $\eta$ 
based on the leading-order $O(\vep^2)$ terms in \eqref{eq:disp_expansion}. Interestingly, it can be seen from \eqref{eq:disp_f2f4} that the leading-order expressions for both square and circular gaits are identical up to multiplication by a constant. An important observation is that this relation is fairly general. That is, for any small-amplitude gait and any swimmer's dynamics with the same structure (i.e. not necessarily assuming resistive force theory), the leading-order term of $X$ can always be decomposed into a function of swimmer's geometry multiplied by a function of the gait's unscaled trajectory.  This key statement is summarized in the following theorem:
\begin{theorem}\label{th:global_eta}
Consider a swimmer with two shape variables whose dynamics can be written in the form of equation \eqref{eq:gauge_sym}, under a small-amplitude gait $\vecphi(t)=\vep \vecs(t)$. The leading-order approximation of the swimmer's displacement $X$ over one period can be written in the form: $X^{(2)}=C(\vecs)   \cdot f(\eta)$, where $C(\vecs)$ depends on the shape of the unscaled trajectory and  $f(\eta)$ depends on the swimmer's geometric structure.
\end{theorem}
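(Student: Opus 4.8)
The plan is to compute the order-$\vep^2$ term of the net displacement directly from the velocity expansion \eqref{eq:velocities} and to show that every surviving gait-dependent integral collapses onto a single scalar — the signed area enclosed by the unscaled trajectory $\vecs$ in the shape plane. The product structure then falls out because the only remaining coefficients are evaluations of $\vecG$ and its first derivatives at $\vecphi=0$, which depend solely on the swimmer's geometry.

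First I would record that the leading order vanishes. From the $\vep^1$ line of \eqref{eq:velocities}, $\dot x^{(1)} = \sum_j \vecG_{1j}(0)\dot s_j$, so $X^{(1)} = \sum_j \vecG_{1j}(0)\int_0^T \dot s_j\,dt = 0$ because $\vecs$ is periodic; hence the leading contribution is $X^{(2)} = \int_0^T \dot x^{(2)}\,dt$. Reading the first row of the $\vep^2$ line of \eqref{eq:velocities} and using $(\vecJ\vecG)_{1j} = -\vecG_{2j}$, I would write $\dot x^{(2)} = \sum_j\big[-\theta^{(1)}\vecG_{2j}(0) + s_1\,\partial_{\phi_1}\vecG_{1j}(0) + s_2\,\partial_{\phi_2}\vecG_{1j}(0)\big]\dot s_j$, and then substitute $\theta^{(1)}(t) = \sum_m \vecG_{3m}(0)\,(s_m(t)-s_m(0))$, obtained by integrating \eqref{eq:thetadot1} under zero initial conditions.

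The key step is to introduce the quadratic functionals $A_{kj} = \int_0^T s_k\dot s_j\,dt$ and observe, via integration by parts over the closed loop, that $A_{kj} + A_{jk} = \int_0^T \tfrac{d}{dt}(s_k s_j)\,dt = 0$. Thus $A_{kk}=0$ and $A_{21}=-A_{12}$, so with only two shape variables there is exactly one independent quadratic invariant, namely $A_{12} = \tfrac12\oint(s_1\,ds_2 - s_2\,ds_1) =: C(\vecs)$, the signed enclosed area. The constant initial-value piece of $\theta^{(1)}$ drops out for the same reason, since $\int_0^T\dot s_j\,dt=0$. Collecting terms, every summand in $X^{(2)}$ carries a single factor $A_{12}$ or $A_{21}$, so $C(\vecs)$ factors out and leaves $X^{(2)} = C(\vecs)\,f(\eta)$ with $f(\eta) = \partial_{\phi_1}\vecG_{12}(0) - \partial_{\phi_2}\vecG_{11}(0) - \big(\vecG_{31}(0)\vecG_{22}(0) - \vecG_{32}(0)\vecG_{21}(0)\big)$, a combination of $\vecG$ and $\nabla\vecG$ at the straight configuration that depends only on the geometric parameter $\eta$.

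The main obstacle is really the crux rather than a technical difficulty: one must recognize that the closed-loop integration by parts annihilates the diagonal ``stretching'' terms $\int_0^T s_i\dot s_i\,dt$ and forces the two independent cross terms into the antisymmetric area combination. Once this is seen, the product form is immediate and, notably, requires none of the swimmer-specific symmetry relations that follow \eqref{eq:gauge_sym}. I would close by noting that those symmetries, if imposed, merely simplify the explicit form of $f(\eta)$ (and guarantee that net translation occurs purely along $x$), but are not needed for the factorization itself; the identical argument applied to the second body coordinate yields an analogous product form.
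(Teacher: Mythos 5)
Your proposal is correct and follows essentially the same route as the paper's Appendix~C: expand $\dot x^{(2)}$, substitute the integrated $\theta^{(1)}$, and use closed-loop integration by parts so that only the antisymmetric cross term $\int_0^T s_1\dot s_2\,dt$ survives, with the geometric factor $\partial_{\phi_1}G_{12}(0)-\partial_{\phi_2}G_{11}(0)-G_{31}(0)G_{22}(0)+G_{32}(0)G_{21}(0)$ matching the paper's $C_{12}-C_{21}$ exactly. Your identification of $C(\vecs)$ as the signed enclosed area and the observation that the swimmer-specific symmetry relations are not needed are pleasant clarifications, but not a different argument.
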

The theorem, whose proof is detailed in the supplementary document, implies that based on leading order approximation, the optimal geometry is independent of the gait's shape.
For our particular swimmer model, the geometry-dependent function in \eqref{eq:disp_expansion} is $\eta l(1-\eta)^3(\eta+3)$. As expected, for the cases of $\eta=0$ and $\eta=1$ the net displacement vanishes, and the the optimal value of $\eta$  is easily obtained as $\eta^*=0.4\sqrt{10}-1=0.2649$, which results in maximal displacement of $X^*=0.0859\varepsilon^2$ for the square gait and $X^*=0.0675\varepsilon^2$ for the circular gait. The value of $\eta^*$ is in agreement with the optimal geometry obtained in \cite{giraldi2015optimal} but not with the one in \cite{becker2003self}. This disagreement, which originates from differences in definitions and scaling, is discussed in the sequel.

Next, we use both $O(\vep^2)$ and $O(\vep^4)$ terms in (\ref{eq:disp_expansion})  in order to obtain both optimal ratio $\eta$  and amplitude $\varepsilon$ for the two gaits.
The polynomials $f_2(\eta)$ and $f_4(\eta)$ in \eqref{eq:disp_f2f4} are the same for both gaits up to multiplication by a scalar. However, this is not true for all gaits. For example, it can be shown that for an elliptical gait trajectory, $f_4(\eta)$ is a different polynomial. For a given value of $\eta$, \eqref{eq:disp_expansion} implies that the locally optimal amplitude $\varepsilon^*$  for maximal displacement is given by
\begin{equation}
\varepsilon^*=\sqrt{\dfrac{f_2(\eta)}{2f_4(\eta)}}.
\label{eq:optamp}
\end{equation}
Substituting into (\ref{eq:disp_expansion}) one obtains the optimal displacement as
\begin{equation}
X^*=\dfrac{f_2(\eta)}{4f_4(\eta)}
\label{eq:optdisp}
\end{equation}
This implies the existence of an optimal combination of link ratio $\eta$ and amplitude $\varepsilon$ that achieve a local maximum of the displacement. Differentiating \eqref{eq:optdisp} with respect to $\eta$, the optimal value is obtained by finding the roots of a 9th order polynomial in $\eta$.
Since the expression for $X^*$ in \eqref{eq:optdisp} is  the same for both square and circular gaits up to a multiplicative constant, the optimal geometry for both gaits is obtained as $\eta^*=0.1784$.
Substituting into (\ref{eq:optamp}), optimal amplitudes $\vep$ and displacements $X$ are obtained for each gait and are presented in Table \ref{table:disp_optimal_values}.
Numerically calculating the optimal geometry $\eta$ and stroke amplitude $\vep$ using MATLAB's function {\bf fmincon} gives the results also presented in Table \ref{table:disp_optimal_values}.

The locally optimal amplitude $\varepsilon^*$ based on $O(\vep^4)$ approximation is quite close to the exact value obtained numerically. On the other hand, the large difference in $\eta^*$ can be explained by the large deviation in $X$ for $\vep \approx \pi /2$, as seen in Fig. \ref{fig:trans_vs_eta}(b). The figure also shows that the $O(\vep^4)$ expression for $X$ overestimates the exact value. Finally, it is important to note that all optimization results discussed here are limited to finding local rather than global maxima. In fact, continuing to higher order terms in the expansions of \eqref{eq:disp_expansion} reveals a minimum of $X<0$ for $\vep \approx 3 \; rad$ with larger absolute value than the first positive maximum, and even higher extremum points for larger nonphysical values of $\vep$, see \cite{huber2011micro}.
\begin{table}[t]
\centering
\begin{tabular}{|>{$}c<{$}|>{$}c<{$}|>{$}c<{$}|}
\hline
	     & \text{Analytic}				   & \text{Numeric}
	     \\\hline
                & \varepsilon^* =1.568 \text{rad}  & \varepsilon^*=1.434 \text{rad}
\\
\text{Square gait}   & X^*=0.0967					   & X^*=0.0905
\\
	            & \eta^* = 0.1784				   & \eta^* = 0.2284
\\\hline
                & \varepsilon^* =1.810 \text{rad}  & \varepsilon^*=1.702 \text{rad}
\\
\text{Circular gait} & X^*=0.1012					   & X^*=0.0973
\\
		        & \eta^* = 0.1784				   & \eta^* = 0.2218
\\\hline
\end{tabular}
\caption{Values for maximal displacement}
\label{table:disp_optimal_values}
\vspace{-10pt}
\end{table}

\subsection{Comparison to optimal geometry obtained by Becker {\em et al}}
As mentioned above, the work of Becker {\em et al} \cite{becker2003self} obtained a different value of optimal geometry $\eta$. The explanation for this difference is twofold: first, \cite{becker2003self} considered maximization of the mean forward velocity under a constant torque at the joint rather than net displacement. Second, \cite{becker2003self} scaled swimming distance by the side link's length $l_1$ rather than the total length $l$.

We now show that by adopting the definitions of \cite{becker2003self} and using our leading-order expressions in Proposition \ref{th:x_disp}, one obtains optimal geometry that agrees with \cite{becker2003self}. First, we consider the case where a constant torque is applied on the rotating joint in each part of the square gait. The relation between the joint torques and the joint velocities is given in (\ref{eq:torque}). For the first quarter of the square gait  we have that $\phid_1=0$ and so the second row of (\ref{eq:torque}) is reduced to $\tau_2=W_{22}\phid_2$, with $W_{ij}$ the elements of $\vecW(\vecphi)$ defined in (\ref{eq:W(phi)}).
Assuming a constant torque of $\tau_2(t)=\tau_o$ at the joint for the first quarter cycle of the square gait, we can derive a leading-order expression for the time it would take to complete the quarter gait and then multiply by four in order to obtain:
\begin{IEEEeqnarray}{lcl}
T_{\tau} &=&  \int dt = 4 \int_\varepsilon^{-\varepsilon} \dfrac{dt}{d \phi_2} d \phi_2 = 4 \int_\varepsilon^{-\varepsilon}\dfrac{W_{22}(\vecphi)}{\tau_o}d\phi_2 \nonumber \\ & = &  4 \int_\varepsilon^{-\varepsilon}\dfrac{W_{22}(0)}{\tau_o}d\phi_2
+O(\varepsilon^3)=\dfrac{1}{12}l^3(1\!-\!\eta^2)^3\dfrac{\ct}{\tau_o}
\varepsilon+O(\varepsilon^3)\nonumber\\
\label{eq:Period_time_const_torque} \end{IEEEeqnarray}
The mean forward velocity $V_\tau=X/T_\tau$ can now be computed to leading order as
\begin{equation} \label{eq:V1}
V_\tau^{(1)} =\dfrac{X^{(2)}}{T_\tau^{(1)}}=\varepsilon \dfrac{3\eta(\eta+3)}{l^2(\eta+1)^3}\dfrac{\tau_o}{\ct}.
\end{equation}
From \eqref{eq:V1}, the optimal geometry that maximizes the leading-order mean velocity $V^{(1)}$ is obtained as $\eta^*=0.646$, which is fundamentally different from the optimal geometry for maximal displacement $X$ obtained above.

Becker {\em et al} \cite{becker2003self} also used a different definition of geometric ratio, denoted here as $\eta_b=l_0/l_1$. That is, the length is normalized by that of the side links. The relation between the two definitions of $\eta$ is given by
\begin{equation}
\eta_b=\dfrac{l_0}{l_1}=\dfrac{2\eta}{1-\eta}.
\label{eq:etab}
\end{equation}
Leading-order expressions for the time of quarter period under constant torque $T_\tau$ and the mean velocity $V$ are given in \cite{becker2003self} as (after multiplying $T_{\tau}$ by four and adapting some notation):
\begin{IEEEeqnarray}{l}
T_{\tau}=\left[\dfrac{16}{3}\dfrac{(\eta_b+1)^3l_1^3}{(\eta_b+4)(\eta_b+1)^2+3\eta_b+4}+O(\varepsilon^2)\right]\left(\dfrac{\ct l_0^3}{\tau_o}\right),\\
V_\tau=\left[\dfrac{3}{4}\dfrac{\eta_b(2\eta_b+3)}{(\eta_b+1)^3(\eta_b+2)l_1^2}+O(\varepsilon^2)\right]\left(\dfrac{\tau_o}{\ct l_0^2}\right). \label{eq:Vbecker}
\end{IEEEeqnarray}
Substituting the transformation (\ref{eq:etab}), it can be shown that these expressions agree with our derivations in (\ref{eq:Period_time_const_torque}),\eqref{eq:V1}, and also with our leading order expression for net displacement $X=V_\tau T_\tau$ as given in \eqref{eq:disp_expansion}.  From \eqref{eq:Vbecker},  optimal geometry for maximizing $V_\tau$ has been obtained in \cite{becker2003self} as $\eta_b^*=0.54$, which corresponds to $\eta=0.213$. The difference from the optimal value of $\eta=0.646$ obtained by maximizing $V_\tau$ in \eqref{eq:V1} is now due to the fact that \eqref{eq:V1} is scaled by $l$ while \eqref{eq:Vbecker} is scaled by $l_1$.

\subsection{Period time under constant power }
We now compute series expansions for the period time $T_p$ under constant power $P_0=1$ as formulated in \eqref{eq:Time_const_power}, for both square and circular gaits. These expressions are used in the next section for obtaining approximate expressions of Lighthill's energetic efficiency according to equation \eqref{eq:eff}.
The period time is expanded into a power series as
\begin{equation}
 	T_p=\varepsilon T_{p}^{(1)}+\varepsilon^2 T_{p}^{(2)}+\varepsilon^3 T_{p}^{(3)}+\ldots .
 \end{equation}
Explicit expressions for the different terms $T_p^{(i)}$ depend on the choice of the gait. For the square and circular gaits, these expressions are summarized in the following proposition.


\begin{prop}
For a symmetric, three linked "Purcell swimmer" performing a square or circular gait with amplitude $\varepsilon$, the period times of one full stroke with constant power of $P_o=1$ exerted by the joints are expanded as follows.
\newline
Square gait:

\begin{IEEEeqnarray}{rcl}
T_{p} &=&\frac{\sqrt{6}}{3}\sqrt{\ct l^3 {\left({1 - \eta}^2\right)}^3}\:\varepsilon \nonumber\\&&+ \frac{\sqrt{6  \ct l^3}  {\left( 1-\eta \right)}^4 \left(\eta+1\right) \left(\eta^3 + 3\eta^2 - 3\eta + 1\right)}{12  \sqrt{{\left(1 - \eta^2\right)}^3}}\varepsilon^3+O(\varepsilon^5)
\label{eq:sq_Time_constpower}
\end{IEEEeqnarray}
Circular gait:
\begin{IEEEeqnarray}{l}
T_p=E\left(\frac{\eta^3+3\eta^2-3\eta-1}{\eta^2(\eta+3)}\right)\sqrt{\frac{c_tl^3}{3}\eta^2(1-\eta)^3(\eta+3)}\:\varepsilon+O(\varepsilon^3)
\label{eq:cir_Time_constpower}
\end{IEEEeqnarray}
\end{prop}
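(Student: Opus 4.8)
The plan is to start from the parametrization-invariant formula \eqref{eq:Time_const_power} with $P_o=1$, namely $T_p=\int\sqrt{(\vecphi')^{T}\vecW(\vecphi)\vecphi'}\,d\sigma$, and to use the fact that this line integral does not depend on the parametrization so that it may be evaluated along the natural time variable $t$ of each gait. Writing $\vecphi=\vep\vecs$ so that $\vecphi'=\vep\vecs'$, the integrand becomes $\vep\sqrt{(\vecs')^{T}\vecW(\vep\vecs)\vecs'}$, and the whole task reduces to a Taylor expansion of $\vecW(\vep\vecs)$ in $\vep$ followed by term-by-term integration.

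First I would establish that $\vecW$ is an \emph{even} function of the shape, $\vecW(-\vecphi)=\vecW(\vecphi)$; this is the conceptual key to the odd-power structure of the series. It follows from a reflection symmetry: flipping the sign of both joint angles is a mirror reflection of the swimmer across the axis of the central link, and since the resistive-force drag coefficients $c_t,c_n$ are isotropic, the dissipated power $P=(\vecphid)^{T}\vecW\vecphid$ is invariant under this reflection, which sends $\vecphi\mapsto-\vecphi$ and $\vecphid\mapsto-\vecphid$. Equating powers for arbitrary $\vecphid$ then forces $\vecW(-\vecphi)=\vecW(\vecphi)$ (the same conclusion can be read off the explicit entries of $\vecW$). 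Consequently the Taylor series of $\vecW(\vep\vecs)$ contains only even powers of $\vep$, so $(\vecs')^{T}\vecW(\vep\vecs)\vecs'=A_0(t)+\vep^{2}A_2(t)+O(\vep^{4})$ with $A_0=(\vecs')^{T}\vecW(0)\vecs'$. Taking the square root and factoring out the prefactor $\vep$ yields $T_p=\vep\int\sqrt{A_0}\,dt+\tfrac{1}{2}\vep^{3}\int A_2/\sqrt{A_0}\,dt+O(\vep^{5})$, which already explains why only odd powers of $\vep$ survive.

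The remaining input is the explicit matrix $\vecW(0)$, together with its Hessian at the origin for the $\vep^{3}$ term, all obtained from \eqref{eq:W(phi)} and the closed forms of $\Rsv_i,\vecT_i,\vecE_i$ at $\vecphi=0$. By the interchange symmetry $\vecS$ one has $W_{11}(0)=W_{22}(0)$, and a direct evaluation gives $W_{11}(0)=\tfrac{1}{96}\ct l^{3}(1-\eta^{2})^{3}$, which I would cross-check against the constant-torque computation \eqref{eq:Period_time_const_torque}. For the square gait, on each of its four linear segments $\vecs'$ is a coordinate unit vector, so $A_0$ equals the constant $W_{jj}(0)$; summing $\int_0^{8}\sqrt{A_0}\,dt=8\sqrt{W_{11}(0)}$ reproduces the stated leading term $\tfrac{\sqrt{6}}{3}\sqrt{\ct l^{3}(1-\eta^{2})^{3}}\,\vep$. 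The $\vep^{3}$ coefficient then comes from inserting the Hessian entries into $A_2$ and integrating the resulting piecewise polynomials in $s_j$ segment by segment, which is elementary.

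For the circular gait, $\vecs'(t)=(\cos(t+\tfrac{\pi}{4}),-\sin(t+\tfrac{\pi}{4}))$ is a unit vector that rotates once around, so $A_0(t)=W_{11}(0)-W_{12}(0)\sin\big(2(t+\tfrac{\pi}{4})\big)$ and $T_p^{(1)}=\int_0^{2\pi}\sqrt{A_0}\,dt$. I expect the main obstacle to be this integral: after the substitution $w=2(t+\tfrac{\pi}{4})$ and a half-angle identity it reduces to $4\sqrt{W_{11}(0)+W_{12}(0)}\,E\big(\tfrac{2W_{12}(0)}{W_{11}(0)+W_{12}(0)}\big)$, with $E$ the complete elliptic integral of the second kind. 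To match the stated form I would compute $W_{12}(0)=\tfrac{1}{96}\ct l^{3}(1-\eta)^{3}(\eta^{3}+3\eta^{2}-3\eta-1)$, whereupon the prefactor collapses to $\sqrt{\tfrac{\ct l^{3}}{3}\eta^{2}(1-\eta)^{3}(\eta+3)}$ and the modulus to $\tfrac{\eta^{3}+3\eta^{2}-3\eta-1}{\eta^{2}(\eta+3)}$, as claimed. The genuinely delicate points are thus the correct reduction of this period integral to standard elliptic form with the right parameter, and the careful bookkeeping of the Hessian of $\vecW$ for the square gait's $\vep^{3}$ term; the parity argument, though short, is what makes all even-order terms drop out.
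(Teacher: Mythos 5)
Your proposal is correct and follows essentially the same route as the paper's proof: expand the constant-power period integral \eqref{eq:Time_const_power} in powers of $\varepsilon$ using the Taylor expansion of $\vecW$ about $\vecphi=0$, and reduce the circular gait's leading-order integral to a complete elliptic integral of the second kind, with your values of $W_{11}(0)$ and $W_{12}(0)$ matching the paper's intermediate expressions. Your explicit parity argument $\vecW(-\vecphi)=\vecW(\vecphi)$, which the paper leaves implicit, is a nice way to explain a priori why only odd powers of $\varepsilon$ appear; otherwise the two proofs differ only in that you integrate over the full period in natural time while the paper integrates a quarter period in $\sigma$ and multiplies by four.
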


The function $E(\cdot)$ in \eqref{eq:cir_Time_constpower} denotes a complete elliptic integral of the second kind, whose definition is reviewed in the supplementary document.
The proof of this proposition is also given in the supplementary document.

In order to validate the expansions of $T_p$ in \eqref{eq:sq_Time_constpower} and \eqref{eq:cir_Time_constpower}, Figure \ref{fig:period_time} plots the $O(\vep)$ and $O(\vep^3)$ approximations of $T_p$ as a function of stroke amplitude $\vep$, compared to the exact value of $T_p$ obtained by numerical integration of \eqref{eq:Time_const_power}, for both gaits (a - square, b - circular) under equal links length, $\eta = 1/3$. It can be seen that for small amplitudes, the period time $T_p$ indeed scales linearly with $\vep$ and that for intermediate values of $\vep$ the $O(\vep^3)$ correction term slightly improves the approximation accuracy (square gait only).

\begin{figure}[!t]
    \centering
    \begin{subfigure}[h]{0.49\textwidth}
    \includegraphics[width=\textwidth]{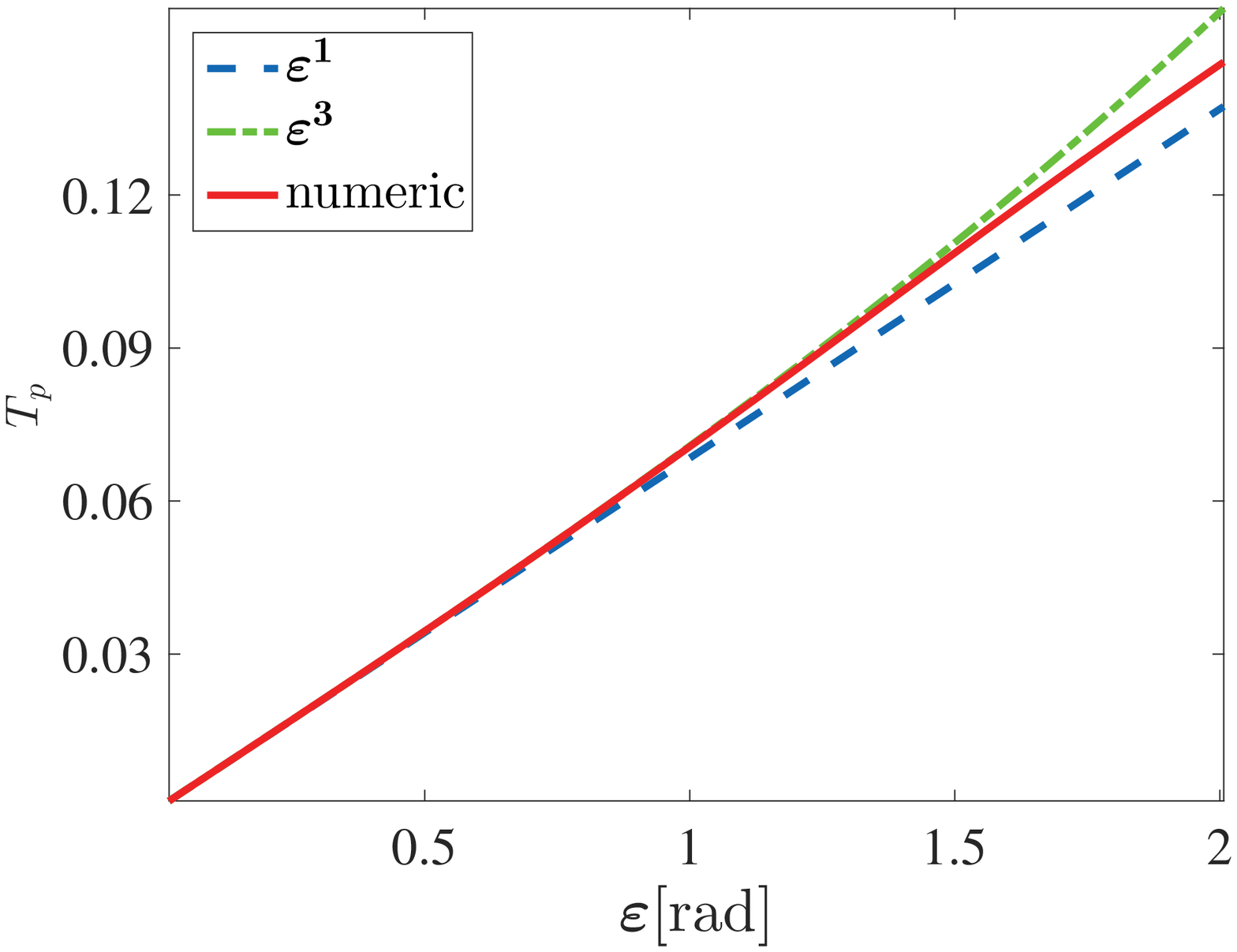}
    \caption{}
    \end{subfigure}
    \begin{subfigure}[h]{0.49\textwidth}
    \includegraphics[width=\textwidth]{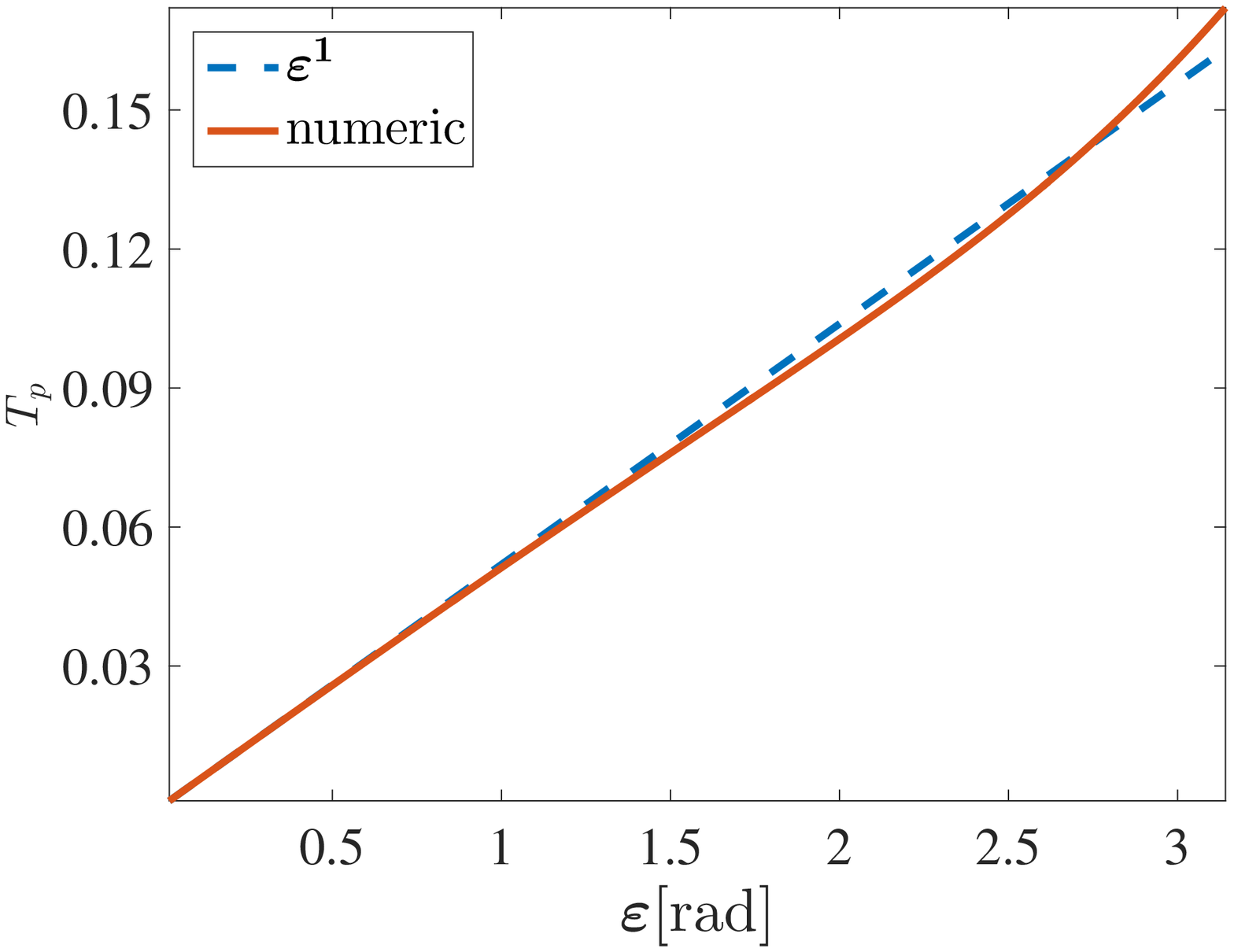}
    \caption{}
    \end{subfigure}
    \caption{Period time under constant power as a function of stroke amplitude $\vep$ for $\eta=1/3$, comparison between exact and approximate expressions. (a) square gait, (b) circular gait. }  
    \label{fig:period_time}%
    \vspace{-10pt}
\end{figure}

\section{Approximate expressions for Lighthill's efficiency}

Using the expressions found in (\ref{eq:sq_Time_constpower}),(\ref{eq:cir_Time_constpower}) for the period times under constant power, along with previous results of net displacement (\ref{eq:disp_expansion}), we now calculate several approximations of Lighthill's energetic efficiency $\xi$ using \eqref{eq:eff}. Then we obtain optimal values of length ratio $\eta$ and stroke amplitude $\vep$ for maximal energetic efficiency, under both square and circular gaits. Using the expansions for $X$ and $T_p$, the efficiency $\xi$ can be written as
\begin{equation}
	\xi=\dfrac{X^2}{T_p^2}=\vep^2 \dfrac{\left( X^{(2)}+ \vep^2 X^{(4)}+ \cdots \right)^2}{\left( T_p^{(1)}+ \vep^2 T_p^{(3)}+ \cdots \right)^2}.
	\label{eq:eff_ep}
\end{equation}
Calculating Taylor expansion of \eqref{eq:eff_ep} in powers of $\vep$, one obtains:
\begin{equation}
	\xi=\vep^2 \dfrac{\left(X^{(2)}\right)^2}{\left(T_p^{(1)}\right)^2}+2\vep^4
X^{(2)}\dfrac{T_p^{(1)}X^{(4)}-X^{(2)}T_p^{(3)}}{\left(T_p^{(1)}\right)^3}
+O(\vep^6).
	\label{eq:eff_expansion}
\end{equation}
Explicit approximations for the efficiency $\xi$ under square and circular gaits, are discussed below.

\subsection{Square gait}
Substituting \eqref{eq:disp_expansion},\eqref{eq:disp_f2f4} and \eqref{eq:sq_Time_constpower} into \eqref{eq:eff_expansion}, the first two elements in the expansion of Lighthill's energetic efficiency under the square gait are given by:
\begin{equation}
	\xi^{(2)}=\dfrac{\left(X^{(2)}\right)^2}{\left(T_p^{(1)}\right)^2}=\dfrac{3 \, \eta^2 (1-\eta)^3(\eta+3)^2}{32 \,\ct\, l {\left(\eta+1\right)}^3} \label{eq:xi2sq}
\end{equation}
\begin{equation}
\xi^{(4)}=\frac{\eta^2 \varepsilon^4 {\left(\eta \! -\! 1\right)}^3 \left(\eta \! + \! 3\right) \left(\eta^9 \! + \! 5 \eta^8 \! - \! 3 \eta^7 \! - \! 39\, \eta^6 \! - \! 37 \eta^5 \! + \! 71 \eta^4 \! + \! 263 \eta^3 \! + \! 371 \eta^2 \! - \! 48 \eta \! + \! 56\right)}{256 c_t l {\left(\eta \! + \! 1\right)}^5}
\end{equation}
Figure \ref{fig:xi_vs_eps_sq} plots the $O(\vep^2)$ and $O(\vep^4)$ approximations of $\xi$ as a function of $\vep$ in dashed and dash-dotted curves,  respectively, under the square gait with length ratio of $\eta=1/3$. For comparison, the exact value of $\xi$ obtained from numerical integration is plotted in solid curve. While the $O(\vep^2)$ approximation grows monotonically with $\vep$, adding the next-order correction of $O(\vep^4)$ also captures a local maximum in the efficiency, which is obtained at an intermediate stroke amplitude of $\vep \approx 1$. Nevertheless,  this approximation of $\xi$ becomes negative for larger values of $\vep$, which is nonphysical according to the definition of $\xi$ in \eqref{eq:eff}. A more reasonable approximation is obtained by plugging the $O(\vep^4)$ and $O(\vep^3)$ approximations for $X$ and $T_p$, respectively, into \eqref{eq:eff}. With a slight abuse of notation, this approximation is denoted here as $O(\vep^{4/3})$, and is given by \begin{equation}
\xi^{(4/3)}=\dfrac{\left(\varepsilon^2X^{(2)}+\varepsilon^4X^{(4)}\right)^2}{\left(\varepsilon T_p^{(1)}+\varepsilon^3T_p^{(3)}\right)^2}. \label{eq:xi43}
\end{equation}
Note that this approximation is different from the direct expansion of \ref{eq:eff_ep} into \ref{eq:eff_expansion}, as demonstrated below.
A plot of this approximation is overlayed as a dotted line in Figure \ref{fig:xi_vs_eps_sq}. It can be seen that the $O(\vep^{4/3})$ approximation is always positive, and correctly captures the general trend of $\xi$ as a function of $\varepsilon$. Finally, it can be seen from the plot that the exact value of $\xi$ attains a global maximum value at $\vep \approx 3$, where the direction of swimming is reversed. This second maximum is not captured by any of the approximate expressions mentioned above. Nevertheless, as already mentioned in \cite{becker2003self}, this maximum  at large strokes where the joint angles approach $\pm \pi$ is often impractical, due to possible collisions between the links. Moreover, resistive force theory which assumes negligible hydrodynamic interaction between the links is no longer valid in this range \cite{huber2011micro}. Thus, we focus here on approximations of the first maximum of $\xi$ which is attained at moderate amplitudes, as discussed next.

\begin{figure}[!b]
    \centering
    \begin{subfigure}[h]{0.49\textwidth}
    \includegraphics[width=\textwidth]{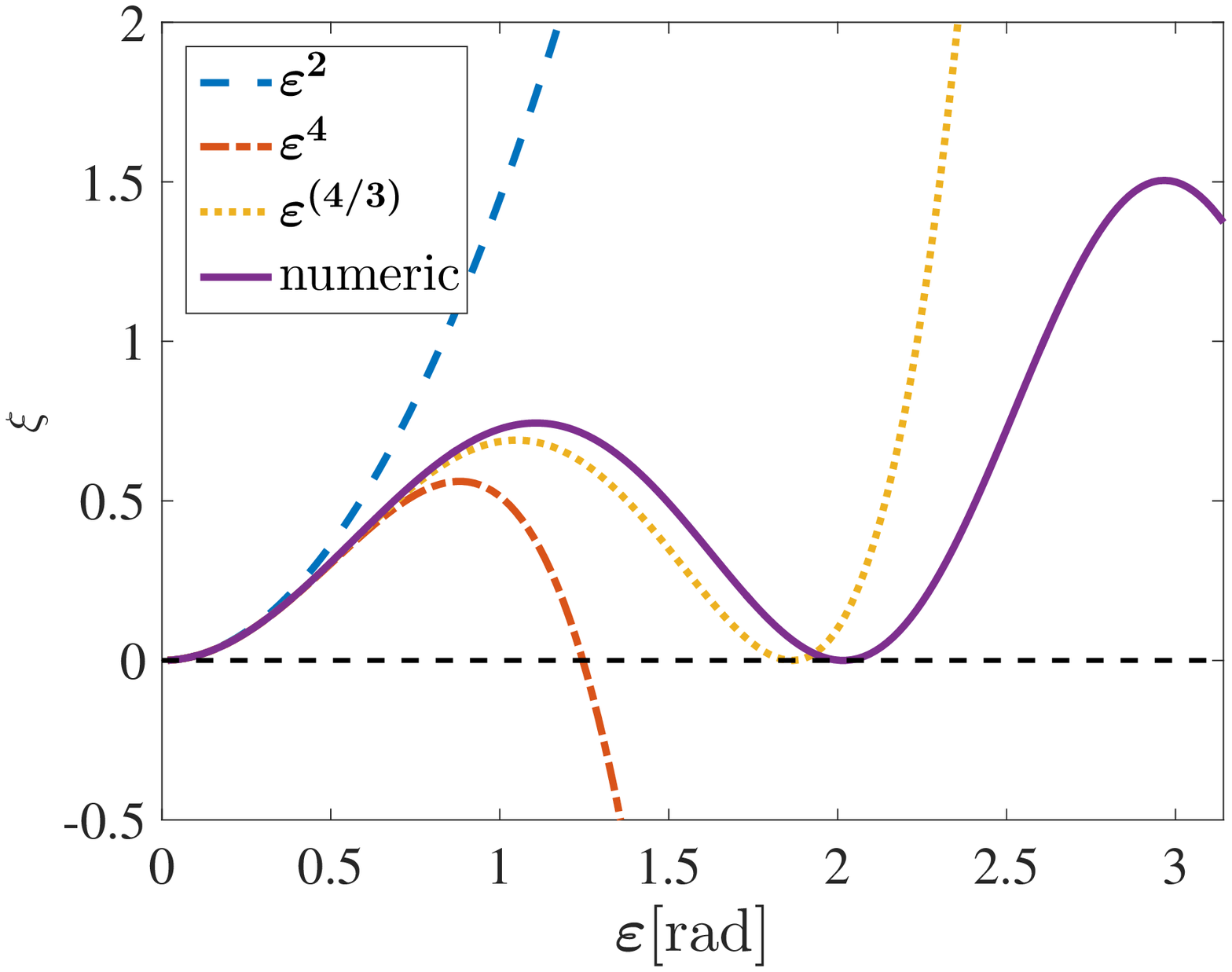}
    \caption{}
    \label{fig:xi_vs_eps_sq}
    \end{subfigure}
    \begin{subfigure}[h]{0.49\textwidth}
    \includegraphics[width=\textwidth]{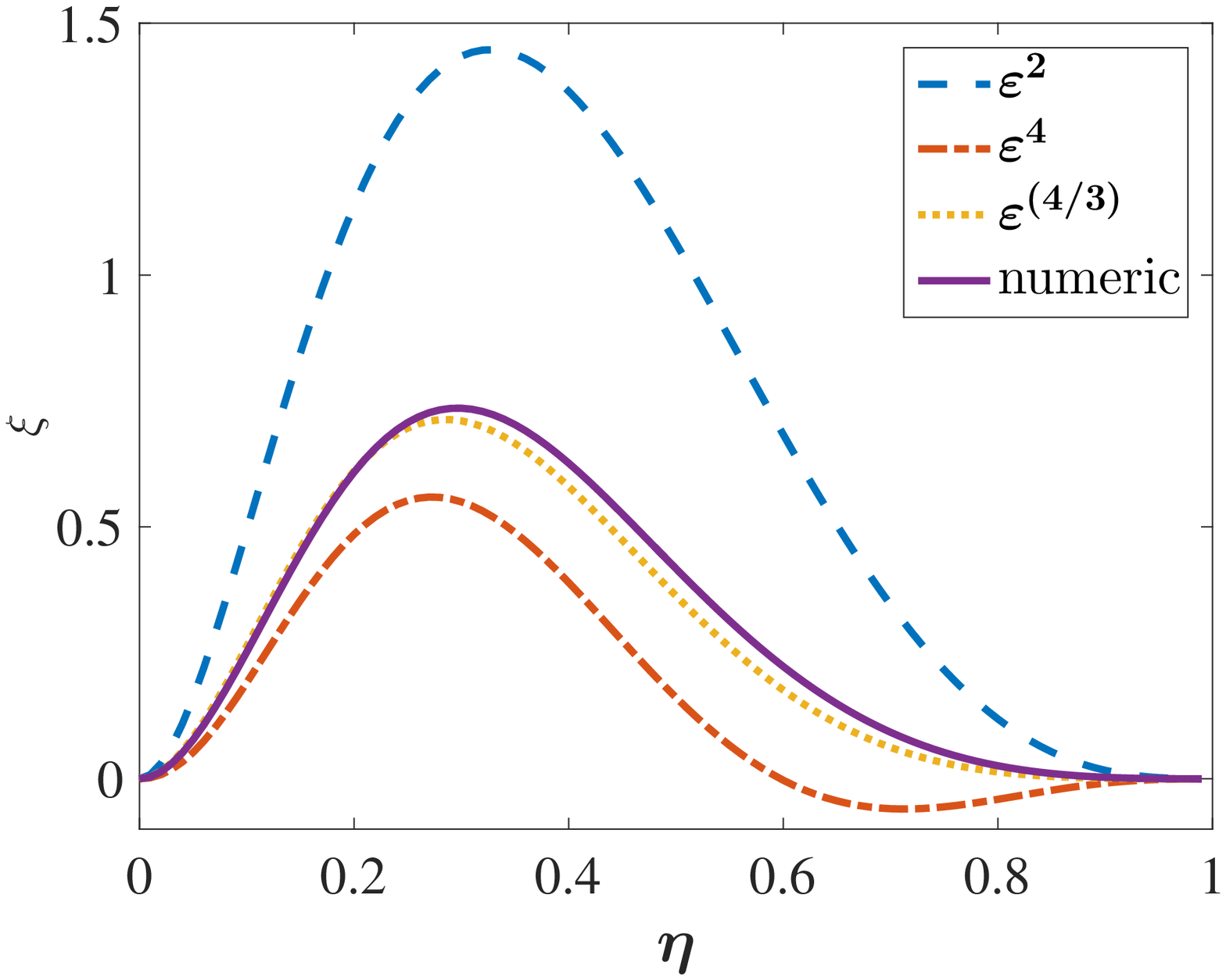}
    \caption{}
    \label{fig:xi_vs_eta_sq}
    \end{subfigure}
    \caption{Lighthill's energetic efficiency $\xi$ under the square gait: (a) Plot of $\xi$ vs. $\vep$ for $\eta=1/3$. (b) Plot of $\xi$ vs. $\eta$ for $\varepsilon=1$. Solid curves - exact (numeric) computation. Dashed, dash-dotted and dotted curves - approximations of $O(\vep^2)$, $O(\vep^4)$, and $O(\vep^{4/3})$, respectively.}
\end{figure}

\paragraph{Efficiency optimization - square gait:}

We now use the approximations of Lighthill's energetic efficiency in order to obtain optimal values of both length ratio $\eta$ and stroke amplitude $\vep$. Using only the leading-order approximation of the efficiency in \eqref{eq:xi2sq} gives an optimal length ratio of $\eta^*=0.327$. Taking also the next-order correction term into account and implementing the same calculation steps shown above in section 3\ref{sec:Optimal geometry for maximal displacement}, the optimal geometry and amplitude are obtained as:
\begin{equation}
\eta^*=0.2853 \quad \varepsilon^*=0.908\text{rad}=52^{\circ} \xi=0.5838
\end{equation}
Using $O(\varepsilon^{(4/3)})$ approximation, the optimal geometry and amplitude are obtained by solving a system of two polynomial equations as:
\begin{equation}
\eta^*=0.2754 \quad \varepsilon^*=1.1038\text{rad}=63.2^{\circ} \quad \xi=0.7294
\end{equation}
Finally, numerical calculation of $\xi$ under the square gait and conducting optimization using MATLAB's function {\bf fmincon}  gives optimal values of:

\begin{equation}
\eta^*=0.2879 \quad \varepsilon^*=1.133\text{rad}=64.9^{\circ} \quad \xi=0.7709
\end{equation}
Figure \ref{fig:xi_vs_eta_sq} shows a plot the approximations of $\xi$ as a function of $\eta$ for a large amplitude of $\varepsilon=1$, compared to the exact computation of $\xi$ obtained numerically. While there are large discrepancies in the value of the efficiency $\xi$ (best captured by the $O(\vep^{4/3})$ approximation), all approximations predict optimal values around  $\eta \approx 0.3$.

\subsection{Circular gait}
For the circular gait, the constant-power period time $T_p$ has been approximated only to first order $O(\vep)$ in \eqref{eq:cir_Time_constpower}. Using  (\ref{eq:disp_expansion}), the expansion of Lighthill's efficiency in \eqref{eq:eff_expansion} can thus be obtained only to leading order, as:
\begin{equation}
\xi=\vep^2 \dfrac{3(1-\eta)^3(\eta+3)}{c_t l E\left(\left.\frac{\pi}{2}\right|\frac{\eta^3+3\eta^2-3\eta-1}{\eta^2(\eta+3)}\right)}+O(\vep^3).
\label{xi2.cir} \end{equation}
Similar to \eqref{eq:xi43}, we define the $O(\vep^{4/1})$ approximation as:
\begin{equation}
\xi^{(4/1)}=\dfrac{\left(\varepsilon^2 X^{(2)+}\varepsilon^4X^{(4)} \right)^2}{\left(T_p^{(1)}\right)^2} \label{eq:xi41}
\end{equation}
Substituting \eqref{eq:disp_expansion} and \eqref{eq:cir_Time_constpower} into \eqref{eq:xi41} gives the expression for  the $O(\vep^{4/1})$ approximation (not shown for brevity). The efficiency $\xi$ and its approximations are plotted in Figure \ref{fig:xi_cir_vs_eps} as a function of the stroke amplitude $\vep$ under equal links lengths $\eta=1/3$, and in Figure \ref{fig:xi_cir_vs_eta} as a function of $\eta$ under amplitude of $\vep=1$. In both plots, $O(\vep^2)$ approximations appear in dashed curves, $O(\vep^{4/1})$ approximations appear in dotted curves, and the exact values computed numerically are shown in solid curves. As before, the leading-order $O(\vep^2)$ approximation does not capture the optimum with respect to amplitude but both approximations capture optimum with respect to $\eta$.

\paragraph{Efficiency optimization - circular gait:}
Using the leading-order approximation for $\xi$ in \eqref{xi2.cir} and numerically searching for maximum with respect to $\eta$, optimal geometry is obtained as $\eta^*=0.3139$, with an efficiency of $\xi^{(2)}=1.5565$. Using the $O(\vep^{4/1})$ approximation in \eqref{eq:xi41}, optimal values of both $\vep$ and $\eta$ are obtained numerically as: \begin{equation}
\eta^*=0.2379 \quad \varepsilon^*=1.3821\text{rad} \quad \xi=1.22 .
\end{equation}
Finally, numerical calculation of $\xi$ under the circular gait conducting optimization using MATLAB's {\bf fmincon} function give optimal values of:
\begin{equation}
\eta^*=0.2684 \quad \varepsilon^*=1.3747\text{rad} \quad \xi=1.2777 .
\end{equation}
Again, the $O(\vep^{4/1})$ expression in \eqref{eq:xi41} achieves a reasonable approximation of the optimum, as also seen from Figures \ref{fig:xi_cir_vs_eps} and \ref{fig:xi_cir_vs_eta}.

\begin{figure}[!t]
    \centering
    \begin{subfigure}[h]{0.49\textwidth}
    \includegraphics[width=\textwidth]{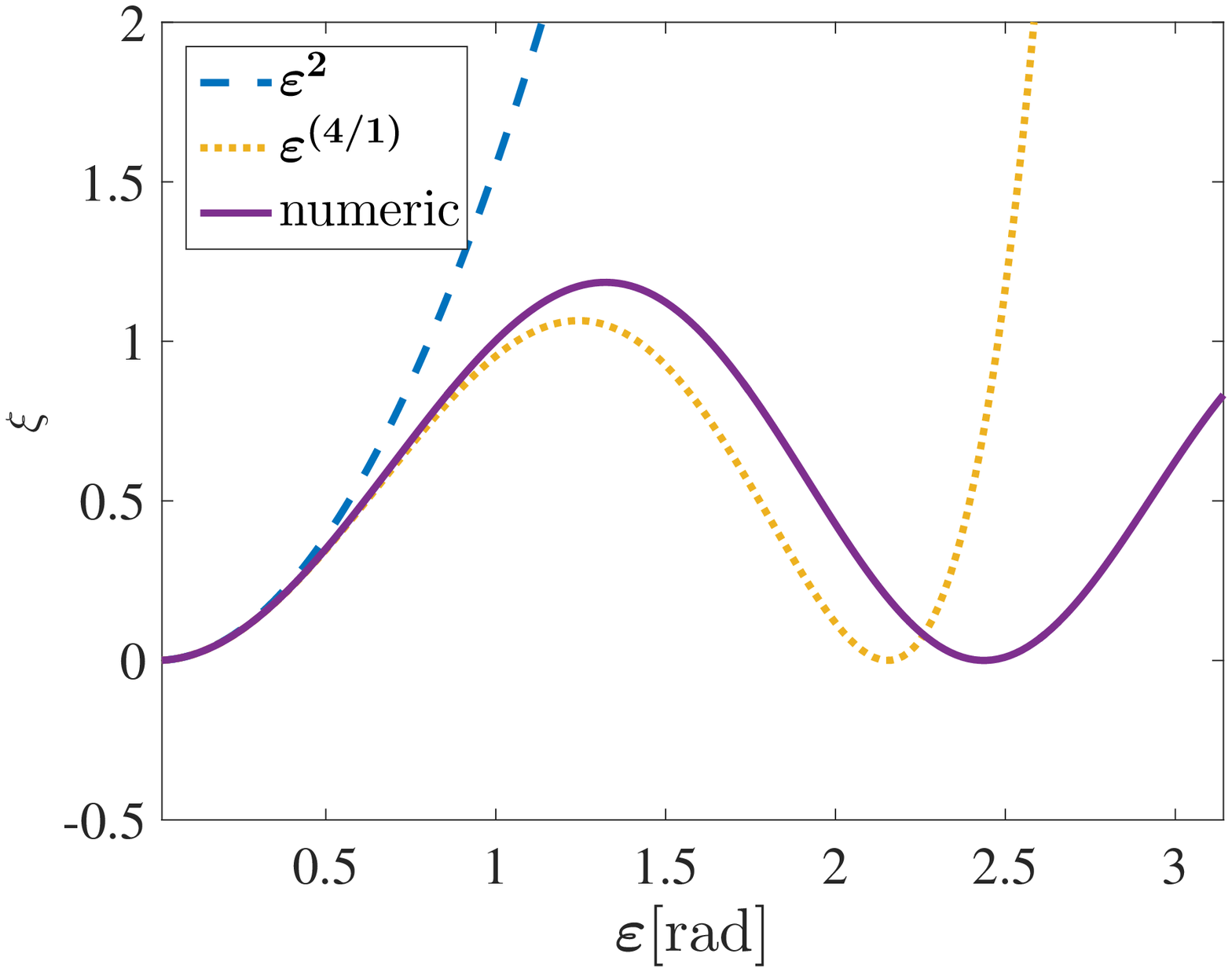}
    \caption{}
    \label{fig:xi_cir_vs_eps}
    \end{subfigure}
    \begin{subfigure}[h]{0.49\textwidth}
    \includegraphics[width=\textwidth]{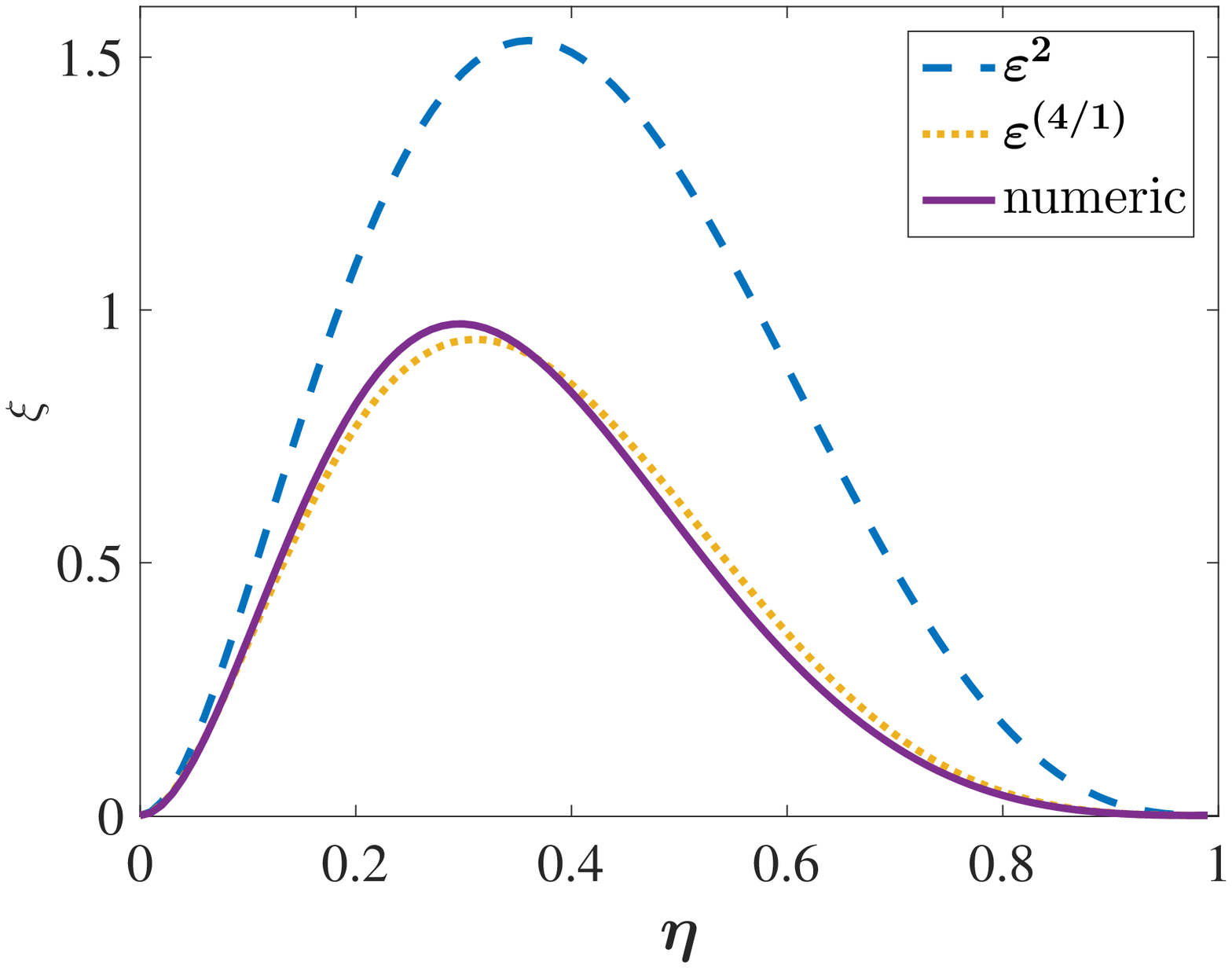}
    \caption{}
    \label{fig:xi_cir_vs_eta}
    \end{subfigure}

    \caption{Lighthill's energetic efficiency $\xi$ under the circular gait: (a) Plot of $\xi$ vs. $\vep$ for $\eta=1/3$. (b) Plot of $\xi$ vs. $\eta$ for $\varepsilon=1$. Solid curves - exact (numeric) computation. Dashed and dotted curves - approximations of $O(\vep^2)$ and $O(\vep^{4/1})$, respectively.}
    \label{fig:xi_cir}
    \vspace{-10pt}
\end{figure}

An important observation is that even in leading order $O(\vep^2)$ approximation of Lighthill's energetic efficiency $\xi$, the optimal geometry $\eta^*$ for maximizing $\xi$ actually depends on the shape of the chosen gait trajectory. This can clearly be seen in the different dependence of $\xi^{(2)}$ on $\eta$ in \eqref{eq:xi2sq} and \eqref{xi2.cir}. This is a substantial difference from the case of maximizing displacement $X$, where dependence of the leading-order approximation in $\eta$ is independent of the gait's shape, as manifested in Theorem \ref{th:global_eta}.

\section{Conclusion}
In this work we have analyzed the motion of the three-link "Purcell swimmer".
We provided a systematic method for deriving an expansion of the velocities of the swimmer and presented leading order expressions and next-order corrections for the net displacement over one period in the cases of a square and circular gait.
Examination of the correction terms confirms that there is a reversal in the direction of the net displacement at high amplitudes, a result which has previously only been shown numerically. The gait amplitude and swimmer geometry that optimize the displacement over one period were approximated using the first two terms in the expansion.
Additionally, by writing asymptotic expressions for the period time under constant power expenditure, we were able to write, for the first time, leading-order term for the energetic efficiency of the square and circular gaits as well as a next-order correction term for the square gait. Once again, we used the obtained expressions in order to find the energetically optimal gait amplitude and swimmer geometry. The results demonstrate the utility of perturbation methods for obtaining approximate explicit expressions for nonlinear dynamics of locomotion systems, which enable analysis and optimization of their performance.

We now briefly discuss some limitations of this work and list possible directions for future extension of the research. First, the swimmer's dynamic equations have been formulated using the simplification of resistive force theory \cite{cox1970,gray1955propulsion}, in which hydrodynamic interaction between the links is neglected. It is well-known (cf. \cite{avron&raz08,Alouges20131189}) that this assumption holds only for highly slender links, and for small stroke amplitudes where the gap between the links remains large even in the vicinity of the joints. Hydrodynamic interactions may be accounted for by using more refined models of slender body theory as in \cite{huber2011micro,tam2007optimal}. The decoupled relations in \eqref{eq:matrixrft} should then be modified to account for inter-link resistance, and the analysis will probably become purely numerical due to the added complexity of the dynamics. Nevertheless, the structure and geometric symmetries of the dynamics in \eqref{eq:dynamics} will be maintained without qualitative changes. In fact, according to the numerical investigation in \cite{huber2011micro,tam2007optimal}, no large quantitative changes in the results are expected even for moderate stroke amplitudes ($\vep \approx 2 \; rad$).

Second, the gait optimization conducted in this work was limited to varying the amplitude of predefined shape trajectories and did not address the possibility of other trajectories which may have better performance. Shape optimization of swimmers has been extensively studied, cf. \cite{Delfour2011,Moubachir2006,Walker2015}. One possible approach \cite{avron.optimal2004,shapere89b,tam2007optimal} is to represent a subset of all possible periodic shape changes by a finite set of variables (e.g. coefficients of a truncated Fourier series) and perform numerical optimization over this discrete set of variables. Another approach is optimal control theory \cite{bryson1975applied}, which is based on calculus of variations. This approach has been exploited by \cite{alouges2009optimal} for obtaining energy-optimal gaits of unidirectional axisymmetric swimmers and by \cite{giraldi2015optimal} for gait optimization of multi-link microswimmers under input constraints. In our recent  work \cite{oren.opt}, we have utilized optimal control in order to obtain unconstrained optimal gaits for maximal displacment of Purcell's swimmer, which exactly reproduce the gaits obtained numerically in \cite{tam2007optimal}. Furthermore, we are currently working on extending the results of \cite{alouges2009optimal} to planar motion of multi-link microswimmers.

Third, experiments conducted on a macro-scale swimmer prototype in a highly viscous fluid \cite{Gutman2015Symmetries} have recently demonstrated the symmetry properties of Purcell's swimmer performing certain symmetric gaits. Similar experiments can be done to verify some results from this paper, e.g. the dependency of net displacement on swimmer geometry and gait amplitude.
Finally, in many practical situations the actuation of robotic swimmers will likely be by applying torques at the joints rather than prescribing the joint angles directly \cite{or2012asymmetry}. This calls for the formulation of an asymptotic expansion of the relation between the joint torques and the joint angles.
 \vskip6pt

%
%
%
%
%



\bibliographystyle{IEEEtran}
\bibliography{mybibliography}

\newpage
\setcounter{page}{1}
{\noindent
\Large Supplementary Document to: "Optimization and small-amplitude analysis of Purcell's three-link microswimmer model"}

\appendix

\addcontentsline{toc}{section}{Appendix~\ref{app:A}:}
\section*{A. Elements of the dynamics matrix $\vecG$ and their derivatives evaluated at $\vecphi=0$}
\label{app:A}

\begin{table}[!h]
{\scriptsize
\begin{tabular}{|>{$}c<{$}|>{$}c<{$}|}
\hline\\
G_{11}(0)=G_{12}(0)=0
& \dfrac{\partial G_{12}}{\partial\phi_1}(0)=-\dfrac{\partial G_{11}}{\partial\phi_2}(0)=-\dfrac{l(1-\eta)^3(\eta+1)}{16} \\[12pt]
\dfrac{\partial G_{2i}}{\partial\phi_j}(0)=0, \quad i,j=1,2
& \dfrac{\partial^2G_{32}}{\partial\phi_2^2}(0)=-\dfrac{\partial^2G_{31}}{\partial\phi_1^2}(0)=\dfrac{3\eta\left(\eta^2-1\right)^2}{16} \\[12pt]
\dfrac{\partial G_{3j}}{\partial\phi_i}(0)=0, \quad i,j=1,2
&  \dfrac{\partial^2G_{32}}{\partial\phi_1^2}(0)=-\dfrac{\partial^2G_{31}}{\partial\phi_2^2}(0)= -\dfrac{3\eta(1-\eta)^3(\eta+3)}{16} \\[12pt]
\dfrac{\partial^2G_{1k}}{\partial\phi_i\partial\phi_j}(0)=0, \quad i,j,k=1,2
& \dfrac{\partial^2G_{21}}{\partial\phi_2^2}(0)=\dfrac{\partial^2G_{22}}{\partial\phi_1^2}(0)=\dfrac{l(3\eta+1)(1-\eta)^3}{32} \\[12pt]

\dfrac{\partial^3G_{2i}}{\partial\phi_j^2\partial\phi_k}(0)=0, \quad i,j,k=1,2
& \dfrac{\partial^2G_{21}}{\partial\phi_1\partial\phi_2}(0)=\dfrac{\partial^2G_{22}}{\partial\phi_1\partial\phi_2}(0)= -\dfrac{l(1-\eta)^3(\eta+1)}{32} \\[12pt]
G_{21}(0)=G_{22}(0)=-\dfrac{l(\eta-1)^2}{8}
& \dfrac{\partial^3G_{11}}{\partial\phi_1^3}(0)=-\dfrac{\partial^3G_{12}}{\partial\phi_2^3}(0)=\dfrac{l(3\eta-1)^2(\eta^2-1)^2}{64} \\[12pt]
G_{31}(0)\!=\!-G_{32}(0)\!=\!\dfrac{(\eta\!-\!1)^2(\eta\!+\!2)}{4}
& \dfrac{\partial^3G_{12}}{\partial\phi_1^3}(0)\!=-\dfrac{\partial^3G_{11}}{\partial\phi_2^3}(0)\!=\dfrac{l(1\!-\!\eta)^3(9\eta^3\!+\!27\eta^2\!-\!5\eta\!+\!1)}{64} \\[12pt]
\dfrac{\partial G_{11}}{\partial\phi_1}(0)\!=\!-\dfrac{\partial G_{12}}{\partial\phi_2}(0)\!=\!-\dfrac{l(\eta^2\!-\!1)^2}{16}
& \dfrac{\partial^3G_{11}}{\partial\phi_1^2\partial\phi_2}(0)\!=\!-\dfrac{\partial^3G_{12}}{\partial\phi_2^2\partial\phi_1}(0)\!=\!\dfrac{l(\eta\!-\!1)^3(\eta\!+\!1)(-3\eta^2\!+\!6\eta\!+\!1)}{64} \\[12pt]
\dfrac{\partial^2G_{21}}{\partial\phi_1^2}(0)=\dfrac{\partial^2G_{22}}{\partial\phi_1^2}(0)=\dfrac{l(\eta^2-1)^2}{32}
& \dfrac{\partial^3G_{12}}{\partial\phi_1^2\partial\phi_2}(0)\!=-\dfrac{\partial^3G_{11}}{\partial\phi_2^2\partial\phi_1}(0)\!=\dfrac{l(1\!-\!\eta)^3(3\eta^3\!+\!3\eta^2\!+\!9\eta\!+\!1)}{64} \\[12pt]
\hline
\end{tabular}}
\caption{Partial derivatives of elements of the matrix $\vecG$ at $\vecphi=0$}
\label{table:Gder}
\end{table}

\addcontentsline{toc}{section}{Appendix~\ref{app:B}:}
\section*{B. Proof of Proposition 3.1}
\label{app:B}
 \renewcommand{\theequation}{B-\arabic{equation}}
 \begin{prop}\label{th:x_disp_app}
For a symmetric, three linked "Purcell swimmer" performing a square or circular gait with amplitude $\varepsilon$, the leading-order term and next-order correction for the displacement $X$ over one full stroke in the direction of $x$ axis  are given as:
\begin{equation}
X=f_2(\eta)\varepsilon^2-f_4(\eta)\varepsilon^4+O(\varepsilon^6)
\label{eq:disp_expansion_app}
\end{equation}
With:
\begin{IEEEeqnarray}{l}
f_2(\eta)=C_2\eta l(1-\eta)^3(\eta+3)\nonumber\\
f_4(\eta)=C_4\eta l(1-\eta)^3(\eta^7+3\eta^6-10\eta^5-22\eta^4+29\eta^3+95\eta^2+44\eta+20)
\label{eq:disp_f2f4_app}
\end{IEEEeqnarray}
Where, for the square gait given in (\ref{eq:squaregait}) we have $C_2=1/4,C_4=1/192$, and for the circular gait given in (\ref{eq:circlegait}) $C_2=\pi/16,C_4=\pi/1024$.
\end{prop}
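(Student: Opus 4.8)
The plan is to obtain $X=\int_0^T\dot x\,dt$ by inserting the velocity expansion \eqref{eq:velocities} and integrating order by order in $\vep$. Before any computation I would first argue that every odd-order contribution to $X$ vanishes, so that $X=f_2\vep^2-f_4\vep^4+O(\vep^6)$ indeed contains only even powers. This follows from the symmetry $\vecG(-\vecphi)=\mathrm{diag}(-1,1,1)\vecG(\vecphi)$: writing $g_k=\sum_j G_{kj}(\vecphi)\dot\phi_j$, the replacement $\vep\mapsto-\vep$ (i.e. $\vecphi\mapsto-\vecphi$, $\vecphid\mapsto-\vecphid$) sends $g_1\mapsto g_1$, $g_2\mapsto -g_2$, $g_3\mapsto-g_3$. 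Substituting $\theta\mapsto-\theta$ into \eqref{eq:gauge_sym} then leaves $\dot x$ invariant while flipping $\dot y$ and $\dot\theta$; since the initial data $(0,0,0)$ is preserved, uniqueness gives $x(t;-\vep)=x(t;\vep)$. Hence $X$ is an even function of $\vep$ and all odd orders drop out.

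For the $\vep^2$ coefficient I would integrate the first row of the $\vep^2$ line of \eqref{eq:velocities}. Using $G_{11}(0)=G_{12}(0)=0$ from Table \ref{table:Gder} confirms $\dot x^{(1)}=0$. Two families of terms remain: the reorientation term $\theta^{(1)}(\vecJ\vecG(0)\dot\vecs)_1$, with $\theta^{(1)}(t)=\tfrac{(\eta-1)^2(\eta+2)}{4}(s_1-s_2)$ obtained by integrating \eqref{eq:thetadot1}; and the shape-gradient term $(s_1\partial_1+s_2\partial_2)(G_{11}\dot s_1+G_{12}\dot s_2)$. Over the closed loop every $\oint s_i\,ds_i$ vanishes and the survivors collapse onto the single signed-area moment $A=\oint s_1\,ds_2=-\oint s_2\,ds_1$. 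Collecting the table entries, the two contributions combine, after pulling out $l(1-\eta)^3/16$, into
\begin{equation*}
X^{(2)}=-\tfrac{A}{16}\,\eta\, l\,(1-\eta)^3(\eta+3),
\end{equation*}
so that the areas $A=-4$ (square) and $A=-\pi$ (circle) reproduce $C_2=1/4$ and $C_2=\pi/16$. This already exhibits the decomposition of Theorem \ref{th:global_eta} with $C(\vecs)\propto A$.

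The $\vep^4$ coefficient is the main labour. Integrating the first row of the $\vep^4$ line of \eqref{eq:velocities} requires $\theta^{(3)}(t)$ (from integrating \eqref{eq:thetadot3}, which feeds on the second derivatives $\partial^2 G_{3j}$), the products $\theta^{(1)2}$ and $\theta^{(1)3}$, and the third shape-derivatives $\partial^3 G_{1j}$ of the first row of $\vecG$ listed in Table \ref{table:Gder} (the vanishing entries $\partial^2 G_{1k}=0$ help); the coupling factors $\vecJ,\vecJ^2,\vecJ^3$ mix the second row of $\vecG(0)$ back into the $x$-equation. The strategy is to reduce the whole integrand to a finite list of closed-loop moments $\oint s_1^{a}s_2^{b}\,ds_c$: every exact time-derivative of a periodic function integrates to zero, and the genuinely contributing remainder, cubic in $s$ against $ds$, collapses onto a handful of such moments. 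For the circular gait these are elementary integrals over $[0,2\pi]$; for the square gait the piecewise-linear $s_i(t)$ of \eqref{eq:squaregait} turn each moment into a sum of four polynomial integrals over the edges. Substituting the table entries and simplifying should yield $X^{(4)}=-f_4(\eta)$ with $f_4$ factoring as $C_4\,\eta l(1-\eta)^3(\eta^7+3\eta^6-\cdots+20)$ and $C_4=1/192$ (square), $\pi/1024$ (circle).

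I expect the order-$\vep^4$ bookkeeping to be the crux. The delicate points are (a) splitting the integrand cleanly into exact-derivative pieces that vanish on the loop and the few surviving moments, since a single sign slip corrupts the whole $\eta$-polynomial; (b) assembling $\theta^{(3)}$ together with the mixed term $\theta^{(1)}\vecJ(s_1\partial_1+s_2\partial_2)\vecG$, where the $\vecJ$-coupling routes the even entries $G_{2j}$ into $\dot x$; and (c) the final simplification, where a long polynomial in $\eta$ must be shown to carry the factor $(1-\eta)^3$ and leave precisely the stated degree-seven polynomial. The emergence of this factored form, the agreement of the two gait-specific constants $C_4$, and a numerical cross-check against direct integration of \eqref{eq:dynamics} (as plotted in Figure \ref{fig:trans_ep}) all serve as strong consistency tests on the computation.
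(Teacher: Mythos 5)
Your strategy is fundamentally the same as the paper's: expand \eqref{eq:gauge_sym} in powers of $\vep$, extract $\dot x$ order by order from \eqref{eq:velocities}, and integrate over one period; the differences are in the bookkeeping. Where you have executed the computation it is correct: the parity argument for evenness of $X$ in $\vep$ is sound (the paper never states it, instead observing a posteriori that the odd-order velocity terms vanish), and your reduction of the $\vep^2$ term to the signed area $A=\oint s_1\,ds_2$ reproduces exactly the paper's proof of Theorem \ref{th:global_eta} in Appendix C, whose coefficient $C_{12}-C_{21}=-\eta l(1-\eta)^3(\eta+3)/16$ together with $A=-4$ (square) and $A=-\pi$ (circle) gives the stated $C_2$. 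The paper's own proof of the proposition is organized differently: it substitutes the explicit gaits \eqref{eq:squaregait}--\eqref{eq:circlegait} first, integrates \eqref{eq:thetadot1}--\eqref{eq:thetadot3} to closed-form $\theta^{(1)}(t)$ and $\theta^{(3)}(t)$ (polynomials in $t$ on a quarter of the square gait, trigonometric polynomials for the circle), substitutes these into \eqref{eq:velocities} to obtain explicit $\dot x^{(2)}(t)$ and $\dot x^{(4)}(t)$, and integrates over a quarter period times four (square, by symmetry) or the full period (circle). Your gait-independent reduction to loop moments $\oint s_1^a s_2^b\,ds_c$ is more structural and makes the role of Theorem \ref{th:global_eta} transparent, but note a subtlety you gloss over at fourth order: $\theta^{(3)}(t)$ is a path functional, not a pointwise polynomial in $\vecs(t)$, so the term $\theta^{(3)}\vecJ\vecG(0)\dot{\vecs}$ collapses to moments only after a further integration by parts, whose boundary term is proportional to the net third-order rotation $\oint\dot\theta^{(3)}dt$ evaluated against $\vecs(0)$ --- itself a combination of cubic moments, so the reduction survives, but it must be tracked. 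The one substantive shortfall is that the $O(\vep^4)$ coefficient, which is the proposition's actual new content, is only planned (``should yield'') rather than derived; as written, the proposal establishes $f_2$ and $C_2$ but takes $f_4$ and $C_4$ on faith from the statement.
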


\begin{proof}
To find the displacement, we first integrate each of eq. (\ref{eq:thetadot1}),(\ref{eq:thetadot2}),(\ref{eq:thetadot3}) after substituting the gait $s_i(t)$, square (\ref{eq:squaregait}) or circular (\ref{eq:circlegait}). This gives the rotation angle
$\theta(t)$, which is substituted into (\ref{eq:velocities})
for obtaining the velocities $\dot x(t)$. Next,  the velocities are integrated in order to obtain the displacement and rotation of the swimmer, as detailed next.

\paragraph{Square gait:}
For the square gait, we focus on the first quarter of period, $t \in [0,2]$ only, and the rest of the period will be obtained later by symmetry considerations. From substitution of \eqref{eq:squaregait} into \eqref{eq:thetadot1}-\eqref{eq:thetadot3} and integrating under zero initial conditions $\theta^{(i)}(0)=0$, the rotation angle $\theta(t)$ is given by
\begin{IEEEeqnarray}{rCl}
\theta^{(1)}(t) &=& \dfrac{1}{4}(\eta-1)^2(\eta+2)t,\\
\theta^{(3)}(t) &=& \eta(\eta-1)^2\left[-\dfrac{1}{32}(\eta+1)^2t^3+\dfrac{3}{16}(\eta+1)t^2-\dfrac{3}{8}\eta t\right]\nonumber
\end{IEEEeqnarray}
where the second-order term of $\theta(t)$ is identically zero. Substituting into \eqref{eq:velocities}, the velocity $\dot x(t)$ is obtained as:

\begin{equation}
\resizebox{\textwidth}{!}{$
\begin{array}{lrcl}
\varepsilon^2: & \dot x^{(2)}(t) &=& \frac{1}{32}\eta l(\eta-1)^2(-\eta^2+2\eta+7)t+\frac{1}{8}\eta l(\eta+1)(\eta-1)^2,\\[8pt]
\varepsilon^4: & \dot x^{(4)}(t) &=& -\frac{1}{3072}\eta l(\eta\!-\!1)^2(\eta^8\!-\!6\eta^7\!-\!21\eta^6\!+\!36\eta^5\!+\!99\eta^4\!-\!150\eta^3\!-\!219\eta^2\!+\!88\eta\!+\!44)t^3\\[6pt]
&&&+\frac{1}{256}\eta l (1\!-\!\eta)^3(256\eta^7\!+\!\eta^6\!-\!6\eta^5\!-\!6\eta^4\!+\!15\eta^3\!+\!51\eta^2\!+\!6\eta\!-\!30)t^2)\\[6pt]
&&&+\frac{1}{32}\eta l(\eta\!-\!1)^2(2\eta^3\!-\!9\eta^2\!+\!3)t+\frac{1}{48}\eta l(\eta\!-\!1)^2(6\eta^2\!-\!5\eta\!+\!1)
\end{array}$}
\end{equation}
Integrating over $0 \leq t \leq 2$ then gives the net translation in $x$ for a quarter gait. Due to symmetries of the swimmer and the gait, this net translation is the same for every quarter of the gait, while motions in both $y$  and $\theta$ cancel out \cite{Gutman2015Symmetries}. Thus, multiplying by 4 the result for the quarter gait gives the full cycle net translations in $x$ as given in \eqref{eq:disp_expansion_app}. Higher order terms can be derived similarly, though writing the solution becomes more complicated since higher order expansions of $\theta(t)$ have to be calculated for integration of higher order terms of $X$.

\paragraph{Circular gait}

For the circular gait, integration with initial condition $\theta^{(i)}(0)=0$ for all orders yields:
\begin{IEEEeqnarray}{rCl}
\theta^{(1)}(t) &=& -\dfrac{\sqrt{2}}{4}(\eta-1)^2(\eta+2)\sin(t),\\
\theta^{(3)}(t) &=& -\dfrac{3}{64}\sqrt{2}\eta(\eta-1)^2(\eta^2+4\eta-3)\sin(t)-\dfrac{1}{64}\sqrt{2}\eta(\eta-1)^2(\eta^2-3)\sin(3t)\nonumber
\end{IEEEeqnarray}
As mentioned, the 2nd order expression for $\theta(t)$ is identically zero.
The velocity $\dot x(t)$ is obtained by substituting $\theta(t)$ into (\ref{eq:velocities}). Note that the first and third order terms in $\dot x$ are identically zero.
\begin{equation}
\resizebox{\textwidth}{!}{$
\begin{array}{lrcl}
\varepsilon^2: & \dot x^{(2)}(t) &=& -\frac{1}{16}\eta l(\eta-1)^2(\eta^2-2\eta-7)\sin^2(t)+\frac{1}{8}\eta l(\eta+1)(\eta-1)^2,\\[8pt]
\varepsilon^4: & \dot x^{(4)}(t) &=& -\frac{1}{1536}\eta l(\eta\!-\!1)^2(\eta^8\!-\!15\eta^6\!+\!63\eta^4\!-\!165\eta^2\!+\!160\eta\!-\!76)\cos(2t)\\[6pt]
&&&+\frac{1}{6144}\eta l (1\!-\!\eta)^3(-\eta^7\!+\!\eta^6\!+\!26\eta^5\!-\!10\eta^4\!-\!109\eta^3\!+\!89\eta^2\!+\!212\eta\!-\!180)\cos(4t)\\[6pt]
&&&+\frac{1}{2048}\eta l(1\!-\!\eta)^3(\eta^7\!+\!3\eta^6\!-\!10\eta^5\!-\!22\eta^4\!+\!29\eta^3\!+\!95\eta^2\!+\!44\eta\!+\!20)\IEEEyesnumber
\end{array}$}
\end{equation}
Finally, integrating over a full cycle $0 \leq t \leq 2\pi$, the net displacement over the full cycle $X$ is obtained as in \eqref{eq:disp_expansion_app}.
\end{proof}

\addcontentsline{toc}{section}{Appendix~\ref{app:C}:}
\section*{C. Leading-order dependency on $\eta$}
\label{app:C}
 \renewcommand{\theequation}{C-\arabic{equation}}
  \setcounter{equation}{0}
Here we prove that the leading-order expression for the $x$ displacement of any periodic gait can be decomposed into the product of two parts, one depending only on the swimmer's geometry and the other depending only on the shape of gait trajectory. The result of this is that to leading order, the displacement of a given swimmer is the same for any gait up to a constant, and so we can find a swimmer geometry which optimizes the displacement for all gaits.


\begin{theorem*}
Consider a swimmer with two shape variables whose dynamics can be written in the form of equation \eqref{eq:gauge_sym}, under a small-amplitude gait $\vecphi(t)=\vep \vecs(t)$. The leading-order approximation of the swimmer's displacement $X$ over one period can be written in the form: $X^{(2)}=C(\vecs)   \cdot f(\eta)$, where $C(\vecs)$ depends on the shape of the unscaled trajectory and  $f(\eta)$ depends on the swimmer's geometric structure.
\end{theorem*}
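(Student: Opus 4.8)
The plan is to obtain the leading-order displacement directly from the $O(\vep^2)$ velocity in \eqref{eq:velocities} and to show that, after integrating over one closed period, the dependence on the gait shape and the dependence on the geometry cleanly separate. First I would write the net $x$-displacement as $X^{(2)}=\int_0^T \dot x^{(2)}(t)\,dt$, where $\dot x^{(2)}$ is the first component of $\vecqd^{(2)}$ in \eqref{eq:velocities}. Since the first row of $\vecJ\vecG(0)$ equals $-(G_{21}(0),G_{22}(0))$, reading off that row gives
\begin{equation}
\dot x^{(2)} = -\theta^{(1)}\sum_{j} G_{2j}(0)\,\dot s_j + \sum_{j,k} \frac{\partial G_{1j}}{\partial\phi_k}(0)\, s_k\,\dot s_j .
\end{equation}
The only order-$\vep^2$ ingredient beyond the frozen matrices $\vecG(0)$ and $\partial\vecG/\partial\phi_k(0)$ is the first-order angle $\theta^{(1)}$; note that $\theta^{(2)}$ does not enter at this order (it first appears at $O(\vep^3)$), so no symmetry assumption on the swimmer is required for the leading term.

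The key observation is that $\theta^{(1)}$ is itself \emph{linear} in the unscaled shape. Integrating \eqref{eq:thetadot1} under $\theta^{(1)}(0)=0$ yields $\theta^{(1)}(t)=\sum_i G_{3i}(0)\,(s_i(t)-s_i(0))$, because the coefficients $G_{3i}(0)$ are constants. Substituting this into $\dot x^{(2)}$ and integrating over the period, every surviving term is an $\eta$-dependent constant multiplying a bilinear integral of the form $\int_0^T s_i\,\dot s_j\,dt$; the pieces proportional to $s_i(0)$ drop out because $\int_0^T \dot s_j\,dt=s_j(T)-s_j(0)=0$ on a closed loop.

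The factorization then follows from a closed-loop identity. For any closed gait $\int_0^T s_i\,\dot s_i\,dt=\tfrac12\oint d(s_i^2)=0$, while $\int_0^T s_1\,\dot s_2\,dt=-\int_0^T s_2\,\dot s_1\,dt=:A$, where $A=\oint s_1\,ds_2$ is the signed area enclosed by the trajectory in the $(s_1,s_2)$ plane. Hence all integrals collapse onto the single scalar $A$, giving $X^{(2)}=A\cdot f(\eta)$ with
\begin{equation}
f(\eta) = -( G_{31}(0)G_{22}(0)-G_{32}(0)G_{21}(0) ) + \frac{\partial G_{12}}{\partial\phi_1}(0) - \frac{\partial G_{11}}{\partial\phi_2}(0) ,
\end{equation}
a pure function of the geometry; setting $C(\vecs)=A$ completes the argument. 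I would then sanity-check against Proposition~\ref{th:x_disp}: the enclosed areas are $|A|=4$ (square) and $|A|=\pi$ (circle), whose ratio $4/\pi$ matches $C_2^{\mathrm{sq}}/C_2^{\mathrm{cir}}=(1/4)/(\pi/16)$, and substituting the entries of Table~\ref{table:Gder} into $f(\eta)$ reproduces $\tfrac{1}{16}\,\eta l(1-\eta)^3(\eta+3)$ up to sign.

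The main point to get right — rather than a genuine obstacle — is establishing that the leading-order term involves \emph{only} $\theta^{(1)}$ and the frozen matrices, so that $X^{(2)}$ is a \emph{bilinear} functional of $\vecs$; this is precisely what lets every integral reduce to the single shape invariant $A$. The same scheme fails at the next order $O(\vep^4)$, where cubic and quartic integrals of $\vecs$ no longer collapse to one invariant, which is exactly why the $f_4(\eta)$ polynomials in Proposition~\ref{th:x_disp} are gait-dependent.
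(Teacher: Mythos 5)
Your proposal is correct and follows essentially the same route as the paper's proof in Appendix C: expand to $O(\vep^2)$, substitute the linear-in-$\vecs$ expression for $\theta^{(1)}$, and use periodicity (equivalently, integration by parts on the closed loop) so that only the antisymmetric combination $\left(C_{12}-C_{21}\right)\int_0^T s_1\dot s_2\,dt$ survives, with the $\eta$-dependence confined to the coefficient. The only addition beyond the paper's argument is your identification of $\int_0^T s_1\dot s_2\,dt$ with the signed enclosed area and the numerical cross-check against Proposition \ref{th:x_disp}, which is a pleasant but nonessential refinement.
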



\begin{proof}
From (\ref{eq:velocities}) we have that the swimmer velocities are:
\begin{equation}
\vecqd^{(2)} = \left[\theta^{(1)}\vecJ\vecG(0)+ \left(s_1\dfrac{\partial}{\partial\phi_1}\Big|_{(0,0)}+s_2\dfrac{\partial}{\partial\phi_2}\Big|_{(0,0)}\right)\vecG\right]\dot{\vecs}(t)
\label{eqapp1} \end{equation}
and from (\ref{eq:thetadot1}) we have the angular velocity:
\begin{equation}
\thetad^{(1)}(t)=\sum_{j=1}^2\vecG_{3j}(0) \dot s_j(t)
\end{equation}
integrating we get the rotation angle:
\begin{equation}
\theta^{(1)}(t)=\int_{t_0}^{t} \thetad^{(1)}dt=\sum_{j=1}^2\vecG_{3j}(0)\int_{t_0}^{t} \dot s_j(t) dt= \sum_{j=1}^2\vecG_{3j}(0)\left(s_j(t)-s_j(t_0)\right)
\end{equation}
substituting into \eqref{eqapp1} we obtain the velocities:
\begin{eqnarray}
\vecqd^{(2)} &=& \left[\vecJ\vecG(0)(G_{31}s_1(t)+G_{32}s_2(t))+\dfrac{\partial\vecG}{\partial\phi_1}s_1(t)+\dfrac{\partial\vecG}{\partial\phi_2}s_2(t)\right]\dot{\vecs}\nonumber\\
&&-\left(\vecJ\vecG(0)(G_{31}s_1(0)+G_{32}s_2(0))\right)\dot{\vecs}
\end{eqnarray}
The velocity in the direction of the $x$ axis is (all terms and derivatives of the matrix $\vecG$ are evaluated at $\vecphi=0$, the notation has been dropped for convenience):
\begin{eqnarray}
\dot x^{(2)}&=&\left(-G_{21}G_{31}+\dfrac{\partial G_{11}}{\partial \phi_1}\right)s_1(t)\dot{s}_1(t)+\left(-G_{22}G_{31}+\dfrac{\partial G_{12}}{\partial \phi_1}\right)s_1(t)\dot{s}_2(t)\nonumber\\[6pt]
&&+\left(-G_{21}G_{32}+\dfrac{\partial G_{11}}{\partial \phi_2}\right)s_2(t)\dot{s}_1(t)+\left(-G_{22}G_{32}+\dfrac{\partial G_{12}}{\partial \phi_2}\right)s_2(t)\dot{s}_2(t)\nonumber\\[10pt]
&&+\left(G_{31}s_1(0)+G_{32}s_2(0)\right)\left(G_{21}\dot{s}_1(t)+G_{22}\dot{s}_2(t)\right)\nonumber\\[10pt]
&=& C_{11}s_1\dot s_1+C_{12}s_1\dot s_2+C_{21}s_2\dot s_1+C_{22}s_2\dot s_2+C_{01}\dot s_1+C_{02}\dot s_2
\end{eqnarray}
Integrating over the full period to get the net displacement:
\begin{eqnarray}
X^{(2)}&=&\int_0^T \dot x dt \nonumber \\
&=&C_{11}\int_0^Ts_1\dot{s}_1dt+C_{12}\int_0^T s_1\dot{s}_2 dt+C_{21}\int_0^T s_2\dot{s}_1 dt \nonumber \\
&&+C_{22}\int_0^T s_2\dot{s}_2dt+C_{01}\int_0^T \dot{s}_1 dt+C_{02}\int_0^T \dot{s}_2dt
\end{eqnarray}
and using integration by parts
\begin{eqnarray}
X^{(2)}&=&
0.5 C_{11}s_1^2\bigg|_0^T+C_{12}\int_0^T s_1\dot{s}_2 dt+C_{21}\left(s_1 s_2\bigg|_0^T - \int_0^T s_1\dot{s}_2 dt\right)\nonumber\\
&&+0.5 C_{22}s_2^2\bigg|_0^T+C_{01}s_1\bigg|_0^T+C_{02}s_2\bigg|_0^T
\end{eqnarray}
since we are dealing only with periodic gaits, we have $\vecs(0)=\vecs(t)$ and we are left with:
\begin{equation}
X^{(2)} = \left(C_{12}-C_{21}\right) \int_0^Ts_1\dot{s}_2 dt=\left(-G_{22}G_{31}+\dfrac{\partial G_{12}}{\partial \phi_1}+G_{21}G_{32}-\dfrac{\partial G_{11}}{\partial \phi_2}\right) \int_0^Ts_1\dot{s}_2 dt . \label{eq:int.app}
\end{equation}
Only the integral in \eqref{eq:int.app} depends on the gait's trajectory $\vecs(t)$, while the terms in parentheses depend only on the swimmer geometry $\eta$.
\begin{remark}
This proof still holds for non-smooth gaits such as the square gait where $\dot \vecs(t)$ is only piecewise continuous, since the set of discontinuity points is of measure zero and therefore they are Riemann integrable.
\end{remark}
\end{proof}

\addcontentsline{toc}{section}{Appendix~\ref{app:D}:}
\section*{D. Elliptic integrals}
\label{app:D}

The elliptic integral of the second kind is defined by:
\begin{equation*}
E(\varphi \setminus \alpha)=\int_0^\varphi\sqrt{ 1-\sin^2 \alpha \sin^2 \theta}d\theta,
\end{equation*}
or
\begin{equation*}
E[\varphi | m]=\int_0^\varphi\sqrt{ 1-m \sin^2 \theta}d\theta,
\end{equation*}
with $0<m=\sin^2 \alpha<1$.

For the case $\varphi=\pi/2$ we have the complete elliptic integral of the second kind:
\begin{equation*}
E(m)=\int_0^{\pi /2} \sqrt{ 1-m \sin^2 \theta}d\theta,
\end{equation*}

\addcontentsline{toc}{section}{Appendix~\ref{app:E}:}
\section*{E. Proof of Proposition 3.2}
\label{app:E}
 \renewcommand{\theequation}{E-\arabic{equation}}
  \setcounter{equation}{0}
  \begin{prop}
For a symmetric, three linked "Purcell swimmer" performing a square or circular gait with amplitude $\varepsilon$, the period times of one full stroke with constant power of $P_o=1$ exerted by the joints are expanded as follows.
\newline
Square gait:

\begin{IEEEeqnarray}{rcl}
T_{p} &=&\frac{\sqrt{6}}{3}\sqrt{\ct l^3 {\left({1 - \eta}^2\right)}^3}\:\varepsilon \nonumber\\&&+ \frac{\sqrt{6  \ct l^3}  {\left( 1-\eta \right)}^4 \left(\eta+1\right) \left(\eta^3 + 3\eta^2 - 3\eta + 1\right)}{12  \sqrt{{\left(1 - \eta^2\right)}^3}}\varepsilon^3+O(\varepsilon^5)
\label{eq:sq_Time_constpower_app}
\end{IEEEeqnarray}
Circular gait:
\begin{IEEEeqnarray}{l}
T_p=E\left(\frac{\eta^3+3\eta^2-3\eta-1}{\eta^2(\eta+3)}\right)\sqrt{\frac{c_tl^3}{3}\eta^2(1-\eta)^3(\eta+3)}\:\varepsilon+O(\varepsilon^3)
\label{eq:cir_Time_constpower_app}
\end{IEEEeqnarray}
\end{prop}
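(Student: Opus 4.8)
The plan is to start from the constant-power period-time formula \eqref{eq:Time_const_power} with $P_o=1$, namely $T_p=\int_{\sigma_0}^{\sigma_1}\sqrt{F(\sigma)}\,d\sigma$ with $F=\vecphi'^T\vecW(\vecphi)\vecphi'$, and to substitute the scaled gait $\vecphi=\vep\vecs$. This gives $F=\vep^2\,\vecs'^T\vecW(\vep\vecs)\vecs'$, so the $\vep$-dependence enters both through the explicit prefactor and through the argument of $\vecW$. I would then Taylor-expand $\vecW(\vep\vecs)$ about $\vecphi=0$ and take the square root. The structural symmetry of the swimmer implies $\vecW(-\vecphi)=\vecW(\vecphi)$, so every first-order partial of $\vecW$ vanishes at the origin; consequently the expansion of $F$ inside the root contains only even powers of $\vep$, which is what ultimately forces $T_p$ to contain only odd powers of $\vep$. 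The explicit entries of $\vecW(0)$ and its second derivatives would be obtained from the definition \eqref{eq:W(phi)}, which requires the closed-form inverse of $\mathcal{R}_{bb}$ at $\vecphi=0$.

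For the square gait I would exploit the fourfold symmetry to reduce the integral to four times the contribution of the first quarter. On that quarter $\phi_1=\vep$ is held fixed while $\phi_2$ sweeps from $\vep$ to $-\vep$, so parametrizing by $\phi_2$ gives $\vecphi'=(0,1)^T$ and hence $F=W_{22}(\vep,\phi_2)$. Expanding $W_{22}$ to second order in $(\phi_1,\phi_2)$, using $\sqrt{W_{22}(0)+\delta}=\sqrt{W_{22}(0)}\,(1+\tfrac12\delta/W_{22}(0)+\cdots)$, and integrating term by term over $\phi_2\in[-\vep,\vep]$ produces the leading $O(\vep)$ term from the constant $W_{22}(0)$ (the interval has width $2\vep$) and the $O(\vep^3)$ correction from the $\vep^2$ and $\phi_2^2$ quadratic terms, while the odd-in-$\phi_2$ contribution integrates to zero. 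Matching $8\sqrt{W_{22}(0)}$ against the stated leading coefficient in \eqref{eq:sq_Time_constpower_app} is consistent with the value $W_{22}(0)=\tfrac{1}{96}\ct l^3(1-\eta^2)^3$ already implicit in the constant-torque computation \eqref{eq:Period_time_const_torque}.

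For the circular gait I would parametrize by $\psi=t+\tfrac{\pi}{4}$, so that $\vecphi=\vep(\sin\psi,\cos\psi)^T$ and $\vecphi'=\vep(\cos\psi,-\sin\psi)^T$. To leading order $\vecW\approx\vecW(0)$, giving
\begin{equation}
F\approx\vep^2\bigl(W_{11}(0)\cos^2\psi-2W_{12}(0)\sin\psi\cos\psi+W_{22}(0)\sin^2\psi\bigr).
\end{equation}
The symmetry $W_{11}(0)=W_{22}(0)$ (interchange of the two joints) collapses this to $\vep^2\bigl(W_{11}(0)-W_{12}(0)\sin2\psi\bigr)$. The key reduction is the identity $\sin2\psi=1-2\sin^2(\psi-\tfrac{\pi}{4})$, which rewrites the radicand as $(W_{11}(0)+W_{12}(0))\bigl(1-m\cos^2(\psi-\tfrac{\pi}{4})\bigr)$ with modulus $m=2W_{12}(0)/(W_{11}(0)+W_{12}(0))$. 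After a shift of the integration variable and using $\int_0^{2\pi}\sqrt{1-m\sin^2\theta}\,d\theta=4E(m)$, one obtains $T_p=4\vep\sqrt{W_{11}(0)+W_{12}(0)}\,E(m)+O(\vep^3)$, which I would match to \eqref{eq:cir_Time_constpower_app} by inserting the explicit values of $W_{11}(0)$ and $W_{12}(0)$.

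The main obstacle is the symbolic algebra for $\vecW$: computing $\vecW(0)$ and, for the square gait, its second-order partial derivatives requires inverting $\mathcal{R}_{bb}$ in closed form and simplifying the resulting rational expressions in $\eta$, which is where the lengthy but routine polynomial bookkeeping lives. The one genuinely non-mechanical step is recognizing and correctly normalizing the elliptic integral in the circular case, in particular verifying that $2W_{12}(0)/(W_{11}(0)+W_{12}(0))$ simplifies to the stated modulus $(\eta^3+3\eta^2-3\eta-1)/(\eta^2(\eta+3))$ and that $16\bigl(W_{11}(0)+W_{12}(0)\bigr)$ collapses to $\tfrac{c_tl^3}{3}\eta^2(1-\eta)^3(\eta+3)$.
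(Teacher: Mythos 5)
Your proposal is correct and follows essentially the same route as the paper's proof: the quarter-gait reduction with $F=W_{22}(\vep,\phi_2)$ and term-by-term Taylor expansion of $\sqrt{W_{22}}$ for the square gait, and reduction of the leading-order circular-gait integrand to a complete elliptic integral of the second kind (the paper evaluates $\int\sqrt{A+B\sin 2\sigma}\,d\sigma$ via the incomplete-elliptic antiderivative over a quarter period, while you reach the same $E(m)$ by the trigonometric identity over the full period). The deferred symbolic values you cite are consistent with the paper's, e.g. $W_{22}(0)=W_{11}(0)=\tfrac{1}{96}\ct l^3(1-\eta^2)^3$, so the plan closes.
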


The function $E(\cdot)$ in \eqref{eq:cir_Time_constpower_app} denotes a complete elliptic integral of the second kind, whose definition is reviewed in above.
\begin{proof}
For each gait trajectory we define a new parametrization as $\vecphi(\sigma)$.  The time-parametrization $\sigma(t)$ that keeps constant power along the gait satisfies \eqref{eq:sigmaP}, and the period time is then calculated by expansion of \eqref{eq:Time_const_power}. 
For the square gait, as before, we consider only a quarter of the gait and then multiply by four due to symmetry.
Parametrization for the first quarter of the gait is
\begin{IEEEeqnarray}{lcl}
	\phi_1=\varepsilon, &\quad& \phid_1=0\nonumber\\
	\phi_2=-\sigma, && \phid_2=-\dot{\sigma}\\
	\sigma \in[-\varepsilon,\varepsilon].\nonumber
\end{IEEEeqnarray}
Using this in equation (\ref{eq:Time_const_power}) with $F(\sigma)=W_{22}(\vecphi(\sigma))$ and $P_o=1$ yields
\begin{equation}
	T_p=4 \int_{-\varepsilon}^\varepsilon\sqrt{W_{22}(\vecphi(\sigma))}d \sigma \label{eq:Tp.int.sq}
\end{equation}
using Taylor expansion of the integrand in \eqref{eq:Tp.int.sq},
the first two leading-order terms of $T_p$ are obtained as:
\begin{IEEEeqnarray}{rrcl}
\varepsilon: \,& T_{p}^{(1)}\, &=&\, \frac{\sqrt{6}}{3}\sqrt{\ct\, l^3 {\left({1 - \eta}^2\right)}^3}\,,\nonumber\\[12pt]
\varepsilon^3: \,& T_{p}^{(3)}\, &=&\, \frac{\sqrt{6 \, \ct \, l^3} \, {\left( 1-\eta \right)}^4 \left(\eta+1\right) \left(\eta^3 + 3\eta^2 - 3\eta + 1\right)}{12  \sqrt{{\left(1 - \eta^2\right)}^3}}
\end{IEEEeqnarray}
Which gives the expansion in \eqref{eq:sq_Time_constpower_app}.

For the circular gait, only first-order expression for the period time is obtained, due to complexity of the expressions. The parametrization used for this gait is
\begin{IEEEeqnarray}{lr}
	\phi_1=\varepsilon \cos(\sigma),\quad & \phid_1=-\varepsilon \sin(\sigma)\dot{\sigma}\nonumber\\
	\phi_2=\varepsilon\sin(\sigma), & \phid_2=\varepsilon\cos(\sigma)\dot{\sigma}\\
	\sigma\in [-\pi/4,\pi/4].\nonumber
\end{IEEEeqnarray}
As before, only a quarter of the gait is considered. Using \eqref{eq:Time_const_power}, one obtains:
\begin{equation}
	T_p=4 \int_{-\pi/4}^{\pi/4}\sqrt{\vctwo{-\sin(\sigma)}{\cos(\sigma)}^T\vecW(\vecphi(\sigma))\vctwo{-\sin(\sigma)}{\cos(\sigma)}\varepsilon^2}d \sigma \label{eq:Tint}
\end{equation}
The leading-order term is obtained by evaluating $\vecW$ at $\vecphi=0$ in \eqref{eq:Tint}, which yields:
\begin{equation}
	T_p^{(1)} = \int_{-\pi/4}^{\pi/4}\sqrt{\frac{\ct\, l^3(\eta-1)^3}{6}\left[ ( \eta + 1 )^3\, \sin(2\sigma)+(1-\eta)(\eta^2 + 4\eta + 1)\right]}
d \sigma \label{eq:Tp1}
\end{equation}
This integral is of the following form:
\begin{IEEEeqnarray}{rCl}
	\int_{-\frac{\pi}{4}}^\frac{\pi}{4}\sqrt{A+B\sin(2\sigma)} ds &=& \left.-\dfrac{E\left[\left.\frac{\pi}{4}-\sigma\right|\frac{2B}{A+B}\right]\sqrt{A+B\sin(2\sigma)}}{\sqrt{\frac{A+B\sin(2\sigma)}{A+B}}}\right|_{\sigma=-\pi/4}^{\pi/4}\nonumber\\
	&=&-\left. E\left[\left.\frac{\pi}{4}-\sigma\right|\frac{2B}{A+B}\right]\sqrt{A+B}\right|_{\sigma=-\pi/4}^{\pi/4}\nonumber\\
	&=& \left(E\left[\left.\frac{\pi}{2}\right|\frac{2B}{A+B}\right]-E\left[0\Big|\frac{2 B}{A+B}\Big.\right]\right)\sqrt{A+B}\nonumber\\
	&=& E\left(\frac{2B}{A+B}\right)\sqrt{A+B} \label{eq:elint}
\end{IEEEeqnarray}
where
\begin{equation}
A=\frac{\ct\, l^3(\eta-1)^3}{6}(1-\eta)(\eta^2 + 4\eta + 1), \quad B=\frac{\ct\, l^3(\eta-1)^3}{6}(\eta+1)^3
\end{equation}
and $E[\varphi|m]$ and $E(m)$ are the notations for the incomplete and complete elliptic integral of the second kind, respectively.
Using the notation of elliptic integral, the first-order term of $T_p$ is then obtained as in \eqref{eq:cir_Time_constpower_app}.
\end{proof}

%


\end{document}